\documentclass[envcountsect]{SNmult}

\usepackage{indentfirst}

\usepackage{amsmath}
\usepackage{amssymb}
\usepackage{stmaryrd}
\usepackage{bm}
\usepackage{nicefrac}
\usepackage{newtxtext}
\usepackage[varvw]{newtxmath}
\usepackage{type1cm}
\usepackage{tikz}
\usetikzlibrary{arrows.meta}
\usepackage{hyperref}
\usepackage[bottom]{footmisc}

\newcommand{\opsize}{0.68ex}
\newcommand{\oplinewidth}{0.50pt}
\newcommand{\opsizeScript}{0.54ex}
\newcommand{\oplinewidthScript}{0.42pt}
\newcommand{\opsizeScriptScript}{0.44ex}
\newcommand{\oplinewidthScriptScript}{0.35pt}

\newtheorem{thm}{Theorem}[section]
\newtheorem{lemm}[thm]{Lemma}
\newtheorem{prop}[thm]{Proposition}
\newtheorem{defi}[thm]{Definition}

\newcommand{\modalopdrawdir}[4]{%
  \vcenter{\hbox{%
    \tikz[x=1ex,y=1ex,baseline=(current bounding box.center)]{%
      \def\r{#3}%
      \def\lw{#4}%
      \ifx#1c%
        \draw[line width=\lw] (0,0) circle [radius=\r];%
      \else\ifx#1s%
        \draw[line width=\lw] (-\r,-\r) rectangle (\r,\r);%
      \fi\fi%
      \ifx#2r%
        \draw[line width=\lw] (0,0) -- (\r,0);%
      \else\ifx#2l%
        \draw[line width=\lw] (0,0) -- (-\r,0);%
      \fi\fi%
    }%
  }}%
}

\newcommand{\ModalForward}[1]{%
  \mathord{\mathchoice
    {\modalopdrawdir{#1}{r}{\opsize}{\oplinewidth}}%
    {\modalopdrawdir{#1}{r}{\opsize}{\oplinewidth}}%
    {\modalopdrawdir{#1}{r}{\opsizeScript}{\oplinewidthScript}}%
    {\modalopdrawdir{#1}{r}{\opsizeScriptScript}{\oplinewidthScriptScript}}%
  }%
}

\newcommand{\ModalBackward}[1]{%
  \mathord{\mathchoice
    {\modalopdrawdir{#1}{l}{\opsize}{\oplinewidth}}%
    {\modalopdrawdir{#1}{l}{\opsize}{\oplinewidth}}%
    {\modalopdrawdir{#1}{l}{\opsizeScript}{\oplinewidthScript}}%
    {\modalopdrawdir{#1}{l}{\opsizeScriptScript}{\oplinewidthScriptScript}}%
  }%
}

\DeclareRobustCommand{\FCircleOp}{\ModalForward{c}}
\DeclareRobustCommand{\BCircleOp}{\ModalBackward{c}}

\newcommand{\Prop}{\mathbb{P}}
\newcommand{\peq}{\preccurlyeq}
\newcommand{\reach}[2]{\gamma(#1,#2)}
\newcommand{\lb}{\left\llbracket}
\newcommand{\rb}{\right\rrbracket}
\newcommand{\val}[1]{\lb #1 \rb}
\newcommand{\normal}{\textsf}
\newcommand{\lgc}[1]{{\mathsf{#1}}}
\newcommand{\nc}{\Box}
\newcommand{\Poly}[1]{\mathsf{PL}(#1)}
\newcommand{\IntOp}{\mathcal{I}}
\newcommand{\ClOp}{\mathcal{C}}
\newcommand{\dyn}[1]{\mathsf{hD\mbox{-}}#1}

\makeatletter
\@ifundefined{question}{\newtheorem{question}{Question}}{}
\makeatother

\begin{document}

\title*{Dynamic Polyhedral Logic}

\author{Nick Bezhanishvili\orcidID{0009-0005-6692-5051} and\\ Laura Bussi\orcidID{0000-0003-1292-4086}
and\\ Vincenzo Ciancia\orcidID{0000-0003-1314-0574} and\\ David Fern\'andez-Duque\orcidID{0000-0001-8604-4183} and\\
David Gabelaia\orcidID{0000-0002-8317-7949}}

\institute{Nick Bezhanishvili \at University of Amsterdam, Amsterdam, Netherlands, \email{n.bezhanishvili@uva.nl}
\and Laura Bussi \at University of Amsterdam, Amsterdam, Netherlands \email{l.bussi@uva.nl}
\and Vincenzo Ciancia \at National Research Council, Pisa, Italy \email{vincenzo.ciancia@isti.cnr.it}
\and David Fern\'andez-Duque \at University of Barcelona, Barcelona, Spain \email{fernandez-duque@ub.edu}
\and David Gabelaia \at TSU Razmadze Mathematical Institute, Tbilisi, Georgia \email{gabelaia@gmail.com}}

\maketitle

\abstract{%
We introduce spatio-temporal polyhedral reachability logics, extending dynamic
topological logic with polyhedral semantics and a path-based spatial reachability
operator. Formulas are interpreted over polyhedra, with admissible valuations
ranging over polyhedral subsets; the spatial modality is interpreted as interior,
the binary operator $\gamma(\varphi,\psi)$ expresses reachability of a
$\psi$-point through a $\varphi$-region, and the temporal modalities are
interpreted by a PL-homeomorphism and its inverse. We define h-dynamic
reachability spaces and axiomatize the corresponding h-dynamic extensions of
the known reachability logics of topological, finite, Alexandroff, and polyhedral
spaces. The main result is soundness and completeness for the intended classes of invertible dynamical systems. 
We are happy to contribute this paper to the volume dedicated to Sergei Artemov---one of the pioneers of combining spatial and temporal reasoning in the framework of dynamic topological logic. 
}

\section{Introduction}

Sergei Artemov's work has uncovered many deep and unexpected connections between logic,
mathematics, computer science, epistemology, and the formal analysis of information.
The present paper is situated in one such line of interaction: the use of modal logic
to reason about spaces, and processes evolving in time.

The starting point is the classical topological semantics of modal logic, going back
to McKinsey and Tarski~\cite{McKinseyTarski1944-the-algebra-of-topology}. In this
semantics, formulas are interpreted over a topological space $X$, with the modal
operator $\nc$ read as topological interior. Thus $\nc\varphi$ holds at a point when
$\varphi$ holds throughout some neighbourhood of that point. The resulting logic is
$\lgc{S4}$, also for a single space that is sufficiently ``rich'', including any dense-in-itself metric space.

Topological semantics can be combined with linear temporal logic ~\cite{G92} to produce a framework for reasoning about {\em dynamical systems,} defined as a topological space $X$ equipped with a continuous self-map $f\colon X\to X$~\cite{akin}.
We may use the topology to interpret the spatial modality and the map $f$ to interpret
the temporal `next' operator. The end result is the logic $\sf S4C$, introduced by Artemov, Davoren and Nerode~\cite{ArtemovDavorenNerode1997-modal-logics-and-topological-semantics-for-hybrid-systems}, known to enjoy several nice properties such as finite axiomatisability and the FMP. 

In the case where $f$ is a homeomorphism, the dynamics become {\em invertible,} as (by definition), $f^{-1}$ is a continuous function as well.
$\sf S4H$, the variant of $\sf S4C$ interpreted over invertible systems, enjoys similar properties, as shown by Kremer and Mints~\cite{KremerMints2005-DTL} who re-dubbed them {\em dynamic topological logics,} or DTLs.
The expectation was for such logics to be instrumental in automated theorem-proving, but with current advances in robotics and artificial intelligence, interest in spatio-temporal reasoning has spread far beyond pure mathematics (see e.g. ~\cite{selfdrive,selfdrive2}).

In the present paper we focus on this
invertible setting and use two temporal operators: $\FCircleOp$ for the forward
iteration of $f$, and $\BCircleOp$ for the backward iteration of $f^{-1}$. Combining these operators with the topologically interpreted $\nc$ allows us to reason about topological dynamics, as for example the continuity of $f$ is equivalent to the validity of $\FCircleOp \nc p\to \nc \FCircleOp p $.
Likewise, if $f$ is a homeomorphism, then $f^{-1}$ is continuous as well, leading to the validity of $\BCircleOp \nc p\to \nc \BCircleOp p $.
This provides an alternative representation of DTL with homeomorphisms, which is usually only presented with $\FCircleOp$ and with the stronger axiom $\FCircleOp \nc p\leftrightarrow \nc \FCircleOp p $.  
The latter is derivable in our system, which we will not explicitly show, but the reasoning is along the lines of the closely related Lemma~\ref{lemHomAx}.

Note however, that ordinary topological semantics with just $\nc$ interpreted as interior is often too coarse to capture genuinely
geometric information. The classical McKinsey--Tarski phenomenon shows that many
topologically different spaces validate the same modal formulas. For example, any class of dense-in-itself metric spaces has the same basic modal logic. A way to
obtain a more geometry-sensitive semantics is to restrict attention to a well-behaved algebra of admissible regions, rather than the full powerset algebra of the underlying space. This is
the guiding idea behind polyhedral semantics \cite{BezhanishviliMarraMcNeillPedrini2018,AdamDayBezhanishviliGabelaiaMarra2024,gabelaia2018modallogicplanarpolygons}.

In polyhedral semantics, the underlying spaces are polyhedra, and formulas are
interpreted as polyhedral subsets \cite{AdamDayBezhanishviliGabelaiaMarra2024}. Equivalently, one works with regions that can be
described using finite simplicial decompositions \cite{RourkeSanderson1972}. This gives a natural semantics for
piecewise-linear geometry: points, segments, triangles, tetrahedra, and their
higher-dimensional analogues. The modal logics arising from this semantics is actively being
studied; in particular, the logic of all polyhedra is 
Grzegorczyk's modal logic $\lgc{Grz}$ and is closely connected with the finite-poset combinatorics of triangulations and nerves~\cite{BezhanishviliMarraMcNeillPedrini2018}. One of the important lessons of
this line of work is that a polyhedron can often be analysed logically through the
face posets of sufficiently fine triangulations \cite{AdamDayBezhanishviliGabelaiaMarra2024}.

For applications and for geometry itself, it is natural to enrich the basic spatial
modal language by a path-based reachability operator~\cite{CLLM16}. We use a binary operator
$\gamma(\varphi,\psi)$, read informally as: ``$\psi$ is reachable through $\varphi$''.
At a point $x$ of a topological space, $\gamma(\varphi,\psi)$ holds if there is a
continuous path starting at $x$, ending at a point satisfying $\psi$, and whose
intermediate points satisfy $\varphi$. Thus $\gamma$ is a spatial analogue of an
Until operator. It can express properties that are invisible to the basic closure/interior
language, such as the existence of a safe path to an exit, propagation through an
allowed region, or the possibility of moving from one spatial component to another
without crossing a forbidden area.

This reachability operator is especially natural in polyhedral models. Continuous
paths can be replaced, for logical purposes, by piecewise-linear paths, and the
interaction between paths and triangulations makes the semantics finite and
combinatorial in a strong sense. This is also what makes the language relevant to
spatial model checking. In such
settings, formulas involving $\gamma$ can specify connectivity, containment,
surroundedness, safe reachability, and related spatial properties~\cite{CLLM16,BCGGLM22}.
When combined with temporal operators, these formulas describe spatial properties of
systems that evolve over time, a situation that arises naturally in domains such as
medical imaging, video analysis, robotics, and geometric model checking
\cite{BCLM19,BCGLM22,FASE26,BCM25}. In the context of medical imaging, anatomical structures are often clearly spatially described by protocols and guidelines. Recent work has shown that spatial model checking can be effectively employed for the declarative analysis and segmentation of magnetic resonance and radiological images~\cite{BCLM19,FASE26}, and in hybrid-AI setups in conjunction with neural networks ~\cite{BCM25}. This permits complex image analysis pipelines to be expressed as logical formulas and automatically verified on image data. Moreover, the adoption of polyhedral semantics enables reasoning not only on voxel-based images but also on geometric representations such as meshes and anatomical surfaces, which are increasingly common in advanced 3D imaging pipelines. These approaches combine explainability, formal guarantees, and computational efficiency, making spatial and spatio-temporal logics a promising foundation for next-generation medical image analysis and computer-aided diagnosis. 
It is also relevant in this context that classical Hennessy-Milner-style theorems proving correctness and completeness of bisimilarity relations with respect to their logical counterparts find fertile grounds in this extended application scenario. For instance \cite{BBCJLMV26} establishes a minimization result of practical impact (compression of images amenable to faster spatial-logical analysis, and supporting tools) stemming from such an adequacy result. 

The aim of this paper is to bring together these three ingredients: polyhedral
semantics, spatial reachability, and invertible dynamics. We introduce and study
h-dynamic polyhedral reachability models, where the underlying space is a polyhedron,
the admissible regions are polyhedral sets, the spatial operator $\nc$ is interpreted
as interior, the reachability operator $\gamma$ is interpreted by (piecewise-linear) paths, and the temporal
operators $\FCircleOp$ and $\BCircleOp$ are interpreted by a PL-homeomorphism and
its inverse.

Our main technical contribution is an axiomatization and completeness theorem for
the resulting logics. We begin with known reachability logics for topological,
Alexandroff, finite, and polyhedral spaces, and then add temporal axioms governing
the two directions of an invertible dynamics and their interaction with reachability.
The key observation is that temporal operators can be pushed down to the propositional
level. More precisely, every formula of the full spatio-temporal language can be
translated into an equivalent ``simple'' formula in the spatial reachability language,
where temporal shifts appear only on propositional variables. Completeness of the
spatio-temporal systems is then reduced to the already established completeness of
their spatial fragments.

In the polyhedral case, this reduction must also respect the geometry. Given a
polyhedral model for the translated formula, we reconstruct an h-dynamic polyhedral
model by arranging finitely many copies of the original polyhedron around a periodic
orbit. In fact, the construction may be chosen so that the dynamics
is a rotation. This yields completeness not only for arbitrary PL-homeomorphisms, but
also for a particularly transparent class of h-dynamic polyhedral systems. 
A precursor to this approach is \cite{Kopnev}, which considers a variant of dynamic polyhedral logic without the reachability operator. In fact, \cite[Section 8]{Kopnev} leaves it as an open problem to develop a dynamic polyhedral logic with the reachability operator. In this respect, the present paper addresses that problem
by providing the appropriate semantics and characterizing its logic.

The paper is organized as follows. In Section~2 we introduce the spatio-temporal
language and the general algebraic semantics for reachability models and h-dynamic
reachability spaces. Section~3 recalls the relevant polyhedral background: simplicial
complexes, open cells, polyhedral sets, PL maps, and the closure properties needed
for interpreting the language. In Section~4 we recall the axiomatic systems for the
spatial reachability logics and define their h-dynamic extensions. We prove soundness
of the temporal and reachability interaction axioms. Section~5 contains the main
completeness argument: the translation to simple formulas, the reconstruction of
h-dynamic models, and the resulting completeness theorems for topological, finite,
Alexandroff, and polyhedral semantics. Finally, Section~6 discusses open problems and
directions for further work, including non-invertible dynamics and extensions with
operators like ``eventually'' and ``henceforth''.

\section{Language and Algebraic Semantics}

In this section, we introduce our formal language and the general algebraic semantics for reachability logics.
We assume familiarity with topological spaces, and follow the convention of notationally identifying a topological space $(X,\mathcal T)$ with $X$.

Let $\Prop$ be a countably infinite set of propositional variables.

\begin{defi}[The Spatio-Temporal Language]
The full language $\mathcal{L}$ is generated by the grammar:
\[
\varphi ::= p \mid \lnot \varphi \mid \varphi \land \psi \mid \nc \varphi \mid \gamma(\varphi, \psi) \mid \FCircleOp \varphi \mid \BCircleOp \varphi \qquad (p \in \Prop)
\]
\end{defi}

The dual spatial diamond operator is defined as $\Diamond \varphi \equiv \lnot \nc \lnot \varphi$.

We let $\mathcal{L}_{\nc, \gamma}$ denote the spatial fragment omitting the temporal $\FCircleOp, \BCircleOp$ operators.

The semantics is defined over dynamic topological systems, where the interpretations of variables range over some `admissible' subsets of our structures, yielding the corresponding models.
Regarding expressive power, the novelty of our proposal is to enrich dynamic topological logic with until-like reachability operators.
The intended semantics is that a point $x$ in a topological space $X$ satsifies $\gamma(\varphi,\psi)$ if there is a (continuous) path $\pi\colon [0,1] \to X$ such that $\pi(0) = x$, $\pi(t)$ satisfies $\varphi$ for $t\in (0,1)$, and $\pi(1)$ satisfies $\psi$.
We can think of this operation in algebraic terms by defining $\gamma\colon 2^  X\times 2 ^ X\to 2^X$, where $\gamma(A,B)$ is the set of all $ x\in X $ such that there exists a path $\pi\colon [0,1]$ with $\pi[(0,1)]\subseteq A$ and $\pi(1)  \in B$.
In order to allow for general semantics for languages with $\gamma$, the admissible valuations must be closed under this operation, leading to the following definition.

\begin{defi}
 Let $X$ be a topological space.
 A family $\mathcal A\subseteq 2^X$ is a {\em reachability algebra} if
 \begin{enumerate}

     \item $\mathcal A$ contains $X$ and is closed under (finite) Boolean operations

     \item If $A\in \mathcal A$ then $ \mathcal \IntOp(A) \in\mathcal A$

          \item If $A,B\in \mathcal A$ then $ \gamma (A,B) \in\mathcal A$.

 \end{enumerate}

Here $\IntOp(A)$ denotes topological interior of $A$.

An {\em h-dynamic reachability space} is a tuple $(X,\mathcal A,f)$, where $X$ is a topological space, $f\colon X\to X$ is a homeomorphism, and moreover, for every $A \subseteq X$, $A\in\mathcal A$ if and only if $f[A] \in \mathcal A$.\end{defi}

The prefix ``h'' in ``h-dynamic'' stands for ``homeomorphism''. In this paper we only deal with dynamical systems where the dynamics is invertible, i.e. the map $f\colon X\to X$ is a homeomorphism.

Note that $2^X$ is an example of a reachability algebra, but as we will see, so are polyhedral sets.

\begin{defi}[Models]
We define the following model structures over our languages:
\begin{itemize}
    \item A \emph{reachability model} is a triple $\mathcal{X} = (X, \mathcal A, V)$, where $\mathcal A$ is a reachability algebra and $V\colon \mathbb P\to \mathcal A$.

    \item An \emph{h-dynamic reachability model} is a tuple $\mathcal{X} = (X, \mathcal A, f, V)$, where $(X,\mathcal A,V)$ is a reachability model and $(X,\mathcal A,f)$ is an h-dynamic reachability space.
\end{itemize}
\end{defi}

We omit the parameter $\mathcal A$ in the above notation whenever $\mathcal A=2^X$.
Given a model $\mathcal{X}$ and an evaluation point $x$, the satisfaction relation $\models$ and the truth sets $\val{\varphi} = \{x \mid \mathcal{X}, x \models \varphi\}$ are defined inductively. For the boolean connectives, the clauses are standard. For the modal operators, we define:

\begin{itemize}
    \item $\mathcal{X}, x \models \nc \varphi \iff x \in \IntOp\val{\varphi}$.
    \item $\mathcal{X}, x \models \FCircleOp \varphi \iff f(x) \in \val{\varphi}$.
    \item $\mathcal{X}, x \models \BCircleOp \varphi \iff f^{-1}(x) \in \val{\varphi}$.
    \item $\mathcal{X}, x \models \gamma(\varphi, \psi) \iff$ there exists a continuous topological path $\pi: [0,1] \to X$ such that $\pi(0) = x$, $\pi(1) \in \val{\psi}$, and $\pi((0,1)) \subseteq \val{\varphi}$.
\end{itemize}

We are especially interested in reachability algebras where $\mathcal A$ is composed of geometrically nice, piecewise-linear sets. In the next section we recall the relevant definitions.

\section{The Algebras of Polyhedral Subsets}

The focus of this study, \emph{polyhedra}, can be seen as finite unions of \emph{polytopes}, i.e. convex hulls of finite subsets of an Euclidean space. Simplest such structures are points, line segments, triangles, tetrahedra, etc.

\begin{defi}[Simplex]
A \emph{$d$-simplex} $\sigma$ is the convex hull of a finite set $V = \{v_0,\ldots,v_d\} \subseteq \mathbb{R}^m$ of $d + 1$ affinely independent points. The value $d$ defines the \emph{dimension} of $\sigma$, and the vectors $v_0,\ldots,v_d$ are its \emph{vertices}.
\end{defi}

Simplices are intrinsically bounded, convex, and compact subspaces of $\mathbb{R}^m$. A \emph{face} of $\sigma$ is the convex hull $\tau$ of a non-empty subset $T \subseteq V$ of its vertex set. If $T$ is a proper subset of $V$, then $\tau$ is a \emph{proper face} of $\sigma$. We denote the set-theoretic inclusion partial order relation on simplices by $\tau \peq \sigma$.

\begin{defi}[Relative Interior / Open Cell]
The \emph{relative interior} of a simplex $\sigma$ with vertex set $\{v_0,\ldots,v_d\}$ is defined as:
\[
\tilde{\sigma} = \left\{ \sum_{i=0}^d \lambda_i v_i \ \mid \ \forall i, \lambda_i \in (0,1] \text{ and } \sum_{i=0}^d \lambda_i = 1 \right\}.
\]
An \emph{open cell} is the relative interior of some simplex.
\end{defi}

Every simplex $\sigma$ can be partitioned into the relative interiors of its faces.

More complex structures are generated by joining simplices along shared boundaries.

\begin{defi}[Simplicial Complex]
A \emph{simplicial complex} $K$ is a finite set of simplices in $\mathbb{R}^m$ satisfying the following conditions:
\begin{itemize}
    \item[\normal{1.}] If $\sigma \in K$ and $\tau \peq \sigma$, then $\tau \in K$.
    \item[\normal{2.}] If $\sigma, \tau \in K$, then either $\sigma \cap \tau = \emptyset$ or $\sigma \cap \tau \peq \sigma$.
\end{itemize}
\end{defi}

The underlying geometric space of $K$, called the \emph{polyhedron} of $K$ and denoted by $|K|$, is the set-theoretic union of its component simplices:
\[
|K| = \bigcup_{\sigma \in K} \sigma.
\]
Because $K$ is finite and each simplex is a compact subset of $\mathbb{R}^m$, every polyhedron $|K|$ investigated in this setting is a compact topological space under the subspace topology inherited from $\mathbb{R}^m$.

Distinct simplicial complexes can have the same underlying polyhedron. Conversely, a given  non-finite polyhedron can be subdivided into simplices in many ways. Whenever $P$ is a polyhedron and $K$ is a simplicial complex with $|K|=P$ we will say that $K$ triangulates $P$. Each triangulation induces a partition of $P$ into the open cells.

\begin{lemm}
Each point of a polyhedron $|K|$ belongs to the relative interior of exactly one simplex in $K$. That is, the collection of open cells $\tilde{K} = \{ \tilde{\sigma} \mid \sigma \in K\}$ forms a partition of $|K|$.
\end{lemm}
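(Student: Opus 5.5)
Existence and uniqueness are handled separately, using barycentric coordinates inside a single simplex together with condition~2 of the definition of a simplicial complex.

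\emph{Existence.} Let $x\in|K|$. Then $x\in\sigma$ for some $\sigma\in K$ with vertex set $V=\{v_0,\ldots,v_d\}$. By affine independence, $x$ has unique barycentric coordinates $x=\sum_i\lambda_i v_i$ with $\lambda_i\ge 0$ and $\sum_i\lambda_i=1$. Let $T=\{v_i\mid \lambda_i>0\}$. This set is nonempty because the coefficients sum to $1$. Let $\tau$ be the face spanned by $T$. Then $x=\sum_{v_i\in T}\lambda_i v_i$ with all coefficients in $(0,1]$, so $x\in\tilde\tau$. By condition~1, $\tau\in K$. This also proves the remark that each simplex is partitioned by the relative interiors of its faces: the existence half is what we just showed. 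The uniqueness half within $\sigma$ holds because any representation of $x$ as an element of $\tilde\rho$ for a face $\rho\peq\sigma$ extends, by padding with zeros, to barycentric coordinates in $\sigma$. Uniqueness of those coordinates then forces the vertex set of $\rho$ to be exactly $T$.

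\emph{Uniqueness across $K$.} Suppose $x\in\tilde\sigma\cap\tilde\tau$ with $\sigma,\tau\in K$. Then $\sigma\cap\tau\neq\emptyset$, so by condition~2, $\rho:=\sigma\cap\tau$ is a face of $\sigma$. I will also use that $\rho$ is a face of $\tau$. Condition~2 is stated asymmetrically, but applying it to the pair $(\tau,\sigma)$ gives this. Now $x\in\rho$, so $x$ lies in the relative interior of some face of $\rho$, which is a face of $\sigma$. By the within-simplex uniqueness, that face is $\sigma$ itself, since $x\in\tilde\sigma$. Hence $\sigma\peq\rho$, so $\rho=\sigma$. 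Symmetrically, $\rho=\tau$. Therefore $\sigma=\tau$, and in particular $\tilde\sigma=\tilde\tau$.

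The main subtle point is the within-simplex uniqueness. A point of $\tilde\rho$ for a proper face $\rho$ has some zero barycentric coordinate in $\sigma$, while every point of $\tilde\sigma$ has all coordinates positive. Affine independence, which makes coordinates unique, is exactly what rules out the overlap. The rest is bookkeeping with the two defining conditions of a simplicial complex. One should also note that distinct simplices of $K$ have distinct open cells, since $\tilde\sigma$ determines $\sigma$ as its closure. This makes $\tilde K$ a genuine partition indexed by $K$.
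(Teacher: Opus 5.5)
The paper states this lemma without proof, as standard background from PL topology, so there is no in-paper argument to compare with. Your proof is correct and is the standard one: the barycentric support gives existence, and uniqueness of barycentric coordinates together with condition~2 gives uniqueness.

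Your proof relies on reading $\peq$ in condition~2 as the face relation, and it is worth saying so explicitly, because the paper's wording (``set-theoretic inclusion'') suggests otherwise. Under the literal inclusion reading, condition~2 would only require $\sigma\cap\tau$ to be a simplex contained in $\sigma$. The lemma would then fail: take $[0,2]$ and $[0,1]$ in $\mathbb{R}$ together with their endpoints, and $\tfrac12$ lies in both open cells.

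A small simplification is available in the uniqueness step: you need not pass to a face of $\rho$. Since $\rho$ is spanned by some $R\subseteq V$, $x\in\rho$ already gives barycentric coordinates of $x$ in $\sigma$ that vanish off $R$, and $x\in\tilde\sigma$ then forces $R=V$.
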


Rather than coupling semantics to one specific simplicial complex, we consider subsets that can be resolved by \emph{some} appropriate triangulation of a given space.

\begin{defi}[Polyhedral Set]
Let $P$ be a polyhedron. A subset $A \subseteq P$ is a \emph{polyhedral set} if there exists a simplicial complex $L$ such that $P = |L|$ and $A$ is a finite union of open cells from $\tilde{L}$. We denote the collection of all polyhedral subsets of $P$ by $\Poly{P}$.
\end{defi}

\begin{prop}\label{prop:closurealgebra}
Let $P$ be a polyhedron. The collection $\Poly{P}$ satisfies the following structural properties:
\begin{itemize}
    \item[\normal{1.}] $\Poly{P}$ forms a Boolean algebra under finite set-theoretic unions, intersections, and complements relative to $P$.
    \item[\normal{2.}] $\Poly{P}$ is closed under the topological interior ($\IntOp$) and closure ($\ClOp$) operators of $P$.
\end{itemize}
Consequently, $(\Poly{P}, \cup, \cap, \neg, \ClOp, \emptyset, P)$ is a closure algebra.
\end{prop}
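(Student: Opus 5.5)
The plan is to work throughout with a single triangulation adapted to all the sets in question. The key tool is the standard fact from PL topology (see \cite{RourkeSanderson1972}): any two triangulations $K_1, K_2$ of the same polyhedron $P$ have a common subdivision $L$. Moreover, if $K'$ subdivides $K$, then every open cell of $K$ is a finite union of open cells of $K'$. The latter holds because each simplex of $K'$ lies in a simplex of $K$, and its relative interior lies in exactly one open cell of $K$, by the partition lemma above applied to $K$. Consequently, if $A$ is a union of open cells of $K$, then $A$ is also a union of open cells of any subdivision of $K$.

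For item 1, note first that $\emptyset$ and $P$ are polyhedral, using any triangulation of $P$ with the empty union and the union of all cells. For complements, suppose $A$ is a union of open cells of $L$. By the partition lemma, $P\setminus A$ is the union of the remaining cells of $\tilde L$, so it is polyhedral. For finite unions and intersections, take $A$ resolved by $K_1$ and $B$ resolved by $K_2$, and pass to a common subdivision $L$. Both $A$ and $B$ are then unions of cells of $\tilde L$. Since $\tilde L$ is a partition, $A\cup B$ and $A\cap B$ are also unions of cells of $\tilde L$.

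For item 2, fix $A=\bigcup_{i}\tilde\sigma_i$ with $\sigma_i\in L$. First compute the closure. The closure of an open cell $\tilde\sigma$ is the simplex $\sigma$, since $\tilde\sigma$ is dense in $\sigma$ and $\sigma$ is compact. The simplex $\sigma$ is in turn the union of the open cells $\tilde\tau$ for $\tau\peq\sigma$, all of which lie in $L$ by closure of complexes under faces. As the union is finite, $\ClOp(A)=\bigcup_i\sigma_i$, which is a finite union of cells of $\tilde L$. Interior then follows from the identity $\IntOp(A)=P\setminus\ClOp(P\setminus A)$ together with item 1. The closure-algebra conclusion follows immediately, since the Kuratowski axioms for $\ClOp$ restricted to $\Poly{P}$ are inherited from the topology of $P$.

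The main obstacle is the existence of common subdivisions, which is a genuinely geometric fact and not a formal one. I would cite it from \cite{RourkeSanderson1972} rather than reprove it. If a sketch were wanted, I would proceed as follows. Intersect each simplex of $K_1$ with each simplex of $K_2$, giving a cell complex of convex polytopes whose faces meet properly. Then triangulate this complex by derived (barycentric-type) subdivision, adding vertices in order of increasing dimension. Every other step is routine bookkeeping with the partition lemma.
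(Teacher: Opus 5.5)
Your proof is correct and is the standard argument. The paper itself gives no proof of this proposition; it recalls it as background from PL topology, so there is no competing route to compare against.

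The only step whose justification is thinner than you suggest is the claim that each open cell $\tilde\sigma'$ of a subdivision $K'$ lies inside a single open cell of $K$. The partition lemma only says that each \emph{point} of $\tilde\sigma'$ lies in exactly one cell of $\tilde K$. It does not say that all these points lie in the same cell.

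The missing line is a convexity observation. Suppose $\sigma'\subseteq\sigma\in K$, and let $\tau\peq\sigma$ be the smallest face containing $\sigma'$. Each barycentric coordinate of $\tau$ is affine and nonnegative on $\sigma'$. By minimality of $\tau$, none of them is identically zero on $\sigma'$. Hence each is strictly positive at every strict convex combination of the vertices of $\sigma'$, which gives $\tilde\sigma'\subseteq\tilde\tau$.

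With that inserted, the rest goes through exactly as you wrote it: complements and closures inside one triangulation, binary operations via a common subdivision, interior by duality, and the Kuratowski axioms inherited from $P$.
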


Dynamic transitions over these spaces are modelled using functions that respect the underlying polyhedral structure.

\begin{defi}[Piecewise-Linear Map]
Let $P \subseteq \mathbb{R}^m$ and $Q \subseteq \mathbb{R}^n$ be polyhedra. A continuous function $f: P \to Q$ is a \emph{PL map} if its graph
\[
\Gamma(f) = \{(x, f(x)) \in P \times Q \mid x \in P\}
\]
is a subpolyhedron of the product space $\mathbb{R}^m \times \mathbb{R}^n$.
\end{defi}

By the standard theory of polyhedra and PL maps, a continuous map between polyhedra is PL exactly when, after suitable subdivisions, the map is affine-linear on each simplex. See, e.g. \cite{RourkeSanderson1972} for more details about polyhedra and PL-maps.

A fundamental property of PL maps is that they  preserve and reflect polyhedral sets.

\begin{lemm}\label{lem:plpreimage}
If $f: P \to Q$ is a PL map and $A \in \Poly{Q}$, then $f^{-1}(A) \in \Poly{P}$.
\end{lemm}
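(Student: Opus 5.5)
We must show that if $f\colon P\to Q$ is PL and $A\in\Poly{Q}$, then $f^{-1}(A)\in\Poly{P}$. The plan is to find triangulations $K$ of $P$ and $L$ of $Q$ such that $A$ is a union of open cells of $\tilde L$ and $f$ is simplicial from $K$ to $L$. The preimage of each open cell of $L$ is then a union of open cells of $K$, and the statement follows because preimages commute with unions.

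First, fix a triangulation $L_0$ of $Q$ for which $A$ is a finite union of open cells of $\tilde L_0$; such an $L_0$ exists by the definition of $\Poly{Q}$. Next, invoke the standard theorem from PL topology (see \cite{RourkeSanderson1972}): for a PL map $f\colon |K_0|\to |L_0|$ between compact polyhedra, there are subdivisions $K$ of $K_0$ and $L$ of $L_0$ such that $f\colon K\to L$ is simplicial. Simplicial means that $f$ sends vertices of each simplex of $K$ to vertices of a simplex of $L$ and is affine on each simplex. This theorem is the main nontrivial ingredient, and I would simply cite it. It applies because $f$ is continuous with a polyhedral graph, hence piecewise affine after subdivision, as recalled above. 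Since $L$ subdivides $L_0$, every open cell of $L_0$ is a finite union of open cells of $L$. So $A$ is still a finite union of open cells $\tilde\tau_1,\dots,\tilde\tau_k$ of $\tilde L$.

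Now take $\sigma\in K$ with vertices $v_0,\dots,v_d$, and let $\tau=f(\sigma)$, which is a simplex of $L$ spanned by $f(v_0),\dots,f(v_d)$ (possibly with repetitions). A point $x=\sum\lambda_i v_i$ with all $\lambda_i>0$ maps to $\sum\lambda_i f(v_i)$. Grouping equal vertices, every vertex of $\tau$ receives positive total weight, so $f(x)\in\tilde\tau$. Thus $f$ maps each open cell $\tilde\sigma$ into a single open cell of $L$. Because $\tilde K$ partitions $P$ (by the lemma on open cells) and $\tilde L$ partitions $Q$, we get
\[
f^{-1}(\tilde\tau_j)=\bigcup\{\tilde\sigma \mid \sigma\in K,\ f(\tilde\sigma)\subseteq\tilde\tau_j\}.
\]
This is a finite union, since $K$ is finite.

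Therefore $f^{-1}(A)=\bigcup_j f^{-1}(\tilde\tau_j)$ is a finite union of open cells of $\tilde K$, and $|K|=P$, so $f^{-1}(A)\in\Poly{P}$. The only real obstacle is the simultaneous-subdivision theorem that makes $f$ simplicial. If one prefers not to cite it, an alternative is to note that $f^{-1}(A)=\pi_1(\Gamma(f)\cap(P\times A))$. One would then use that polyhedral sets are closed under intersection and under linear projection, but that route requires its own citation as well.
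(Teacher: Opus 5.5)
Your proof is correct. The paper gives no proof of this lemma and treats it as a standard fact of PL topology, resting on the cited result (Rourke--Sanderson) that a PL map becomes affine on simplices after suitable subdivisions. Your argument makes that reduction explicit: subdivide so that $f$ is simplicial, observe that each open cell of $L_0$ is a union of open cells of $L$, and note that each open cell $\tilde\sigma$ of $K$ then maps into a single open cell of $L$. It is therefore essentially the intended argument, spelled out.
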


Although evaluating $\gamma(\varphi,\psi)$ in a polyhedron $P$ depends on arbitrary continuous images of the unit interval $[0,1]$, seeing that $[0,1]$ is itself a 1-simplex, it is natural to ask whether PL maps $\pi:[0,1]\to P$ suffice. Indeed this turns out the case, as the following lemmas confirm.

\begin{lemm}{\cite[Lemma 3.10]{BCGGLM22}}\label{lem:PL paths}
Let $P$ be a polyhedron and let $x,y\in P$. Then there exists a path from $x$ to $y$ in $P$ iff there is a PL path from $x$ to $y$ in $P$.
\end{lemm}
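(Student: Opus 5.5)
The statement to prove is Lemma~\ref{lem:PL paths}: for $x,y\in P$, a continuous path from $x$ to $y$ exists iff a PL path does. The right-to-left direction is immediate, since a PL path is continuous by definition. So the plan is to turn an arbitrary continuous path into a PL one with the same endpoints.

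\textbf{Setup.} Fix a triangulation $K$ with $|K|=P$. The main tool is the open star of a vertex $v$ in $K$,
\[
\mathrm{st}(v)=\bigcup\{\tilde\sigma \mid v\in\sigma\in K\}.
\]
We will use three standard facts about open stars:
\begin{itemize}
\item[(a)] each $\mathrm{st}(v)$ is open in $P$;
\item[(b)] the open stars of the vertices cover $P$;
\item[(c)] each $\mathrm{st}(v)$ is star-shaped about $v$: for $z\in\tilde\sigma$ with $v\in\sigma$, the segment $[v,z]$ lies in $\sigma$, and its points other than $v$ lie in $\tilde\sigma$.
\end{itemize}
Consequently, any two points of $\mathrm{st}(v)$ are joined inside $\mathrm{st}(v)$ by the two-segment path $z\to v\to z'$. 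This path is PL because each segment is affine inside a single simplex.

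\textbf{Subdividing the path.} Let $\pi\colon[0,1]\to P$ be continuous with $\pi(0)=x$ and $\pi(1)=y$. The sets $\pi^{-1}(\mathrm{st}(v))$ form an open cover of the compact interval $[0,1]$. By the Lebesgue number lemma we can choose a partition
\[
0=t_0<t_1<\dots<t_n=1
\]
such that each $\pi([t_{i-1},t_i])$ lies in a single open star $\mathrm{st}(v_i)$.

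\textbf{Building the PL path.} Replace $\pi$ on each $[t_{i-1},t_i]$ by the PL path
\[
\pi(t_{i-1})\to v_i\to \pi(t_i).
\]
Concatenate these and reparametrize affinely on subintervals. The result is a PL map $[0,1]\to P$ from $x$ to $y$: its graph is a finite union of segments in $P\times\mathbb{R}^m$, hence a polyhedron.

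\textbf{Main obstacle.} The only delicate step is checking fact (c) and that the segments stay inside $P$. This follows because each segment lies in a simplex of $K$. The paper's source \cite{BCGGLM22} proves a stronger version with the interior of the path constrained to a region; the argument above handles only endpoint connectivity, which is what the lemma as stated asserts.
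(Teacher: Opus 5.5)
Your proof is correct. The paper does not prove this lemma; it is imported from \cite{BCGGLM22}. Your open-star and Lebesgue-number construction is therefore a self-contained substitute rather than a variant of anything in the text, and its details hold up:
the complement of $\mathrm{st}(v)$ is the finite union of the simplices of $K$ not having $v$ as a vertex, hence closed;
barycentric coordinates give (c);
the concatenated path is continuous, and its graph is a finite union of segments in which only consecutive ones meet, at a single endpoint, so it is the underlying set of a one-dimensional simplicial complex.
One slip: the graph lies in $[0,1]\times P\subseteq\mathbb{R}\times\mathbb{R}^m$, not in $P\times\mathbb{R}^m$. Also, you only need $\pi(t_{i-1})$ and $\pi(t_i)$, not all of $\pi([t_{i-1},t_i])$, to lie in $\mathrm{st}(v_i)$.

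Your closing caveat is accurate and matters for how the paper uses this material. The claim that $\gamma(A,B)$ is again polyhedral rests on the constrained version, Lemma~\ref{lem:PL gamma}, and your construction does not give it. The detour through $v_i$ can leave the region to which the interior of the path is confined; take that region to be an open triangle and $v_i$ one of its vertices. For that version you would proceed in three steps:
\begin{enumerate}
\item triangulate so that the region is a union of open cells;
\item use connectedness of $\pi((0,1))$, together with continuity of the map sending a point to its cell for the up-set topology on $K$, to get a finite sequence of cells of the region in which consecutive cells are comparable;
\item move between comparable cells $\sigma\preccurlyeq\tau$ along straight segments, which lie in $\tilde\tau$ except at their endpoint in $\tilde\sigma$.
\end{enumerate}
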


\begin{lemm}{\cite[Lemma 3.11]{BCGGLM22}}\label{lem:PL gamma}
Let $\mathcal{X}$ be a polyhedral model. Then $\mathcal{X},x\models\gamma(\varphi,\psi)\iff$ there exists a PL path $\pi: [0,1] \to P$ with $\pi(0) = x$, $\pi(1) \in \val{\psi}$, and $\pi((0,1)) \subseteq \val{\varphi}$.
\end{lemm}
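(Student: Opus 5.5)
The statement to prove is Lemma~\ref{lem:PL gamma}. The right-to-left direction is immediate, since a PL path is in particular a continuous path with the required properties. For the left-to-right direction, assume a continuous path $\pi\colon[0,1]\to P$ with $\pi(0)=x$, $\pi((0,1))\subseteq\val{\varphi}$ and $\pi(1)\in\val{\psi}$. I will replace $\pi$ by a PL path with the same properties, using that all truth sets are polyhedral.

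The first step is to record that, in a polyhedral model, $\val{\varphi}$ and $\val{\psi}$ lie in $\Poly{P}$. By Lemma~\ref{lem:PL paths}'s setting this is exactly what makes the model a reachability model: for the connectives and $\nc$ it follows from Proposition~\ref{prop:closurealgebra}, and for $\gamma$ it follows from the definition of polyhedral model as a reachability model.

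Next I fix one triangulation $K$ of $P$ fine enough that both $\val{\varphi}$ and $\val{\psi}$ are unions of open cells of $K$. This uses a common refinement of the two triangulations witnessing polyhedrality, which exists by standard PL theory. The key observation is then the star property. If $\tilde\sigma\subseteq\val{\varphi}$ and $y\in\tilde\tau$ with $\sigma\succcurlyeq\tau$, then the straight segment from $y$ to any point $z\in\tilde\sigma$ has its half-open part $(y,z]$ inside $\tilde\sigma$, by convexity of barycentric coordinates.

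Now I discretise $\pi$. The open star $\mathrm{st}(\tau)=\bigcup_{\sigma\succcurlyeq\tau}\tilde\sigma$ of each simplex is open, and these stars cover $P$. By compactness and the Lebesgue number lemma applied to $\pi$, I choose $0=t_0<t_1<\dots<t_n=1$ such that each $\pi([t_{i-1},t_i])$ lies in some open star $\mathrm{st}(\tau_i)$. I then build the PL path by moving through cells rather than copying $\pi$ pointwise.

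Concretely, I consider the finite sequence of open cells met by $\pi$ on $(0,1)$, all contained in $\val{\varphi}$. The point $\pi(t)$ for $t$ near $0$ lies in a cell $\tilde\sigma_1\succcurlyeq$ the cell of $x$, because $x$ is in the closure of the cells $\pi$ visits immediately. Similarly, near $1$ the path lies in a cell $\tilde\sigma_m$ whose closure contains $\pi(1)$.

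Consecutive cells visited by $\pi$ can be chosen face-related, meaning one is a face of the other. For this I use the star cover: within $\mathrm{st}(\tau_i)$ every cell has $\tau_i$ as a face. If $\tilde\tau_i$ is not itself in $\val{\varphi}$, I must be careful to route through a point of the path inside $\mathrm{st}(\tau_i)$ rather than through $\tilde\tau_i$.

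This routing is the main obstacle. I handle it by choosing sample points $y_i=\pi(s_i)$ with $s_i\in(0,1)$, one in each visited cell. I then join consecutive samples by straight segments, breaking each segment at the barycentre of the larger cell when two samples are in face-related cells. For cells $\tilde\alpha\preccurlyeq\tilde\beta$, the segment from a point of $\tilde\alpha$ to the barycentre of $\beta$ lies in $\tilde\alpha\cup\tilde\beta$, and its interior lies in $\tilde\beta$.

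The final segments are handled the same way. From $x$ I go straight into $\tilde\sigma_1$, and the open segment stays in $\tilde\sigma_1\subseteq\val{\varphi}$ by the star property. The last segment runs from a point of $\tilde\sigma_m$ to $\pi(1)$, and its relative interior stays in $\tilde\sigma_m$.

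Concatenating and reparametrising gives a PL path whose interior points all lie in cells contained in $\val{\varphi}$, which starts at $x$ and ends at $\pi(1)\in\val{\psi}$. This proves the claim.
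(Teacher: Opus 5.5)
The paper takes this lemma from \cite{BCGGLM22} without proof, so I judge your argument on its own. Several parts are correct. The right-to-left direction is fine. So are the segment from $x$ into a coface of the cell of $x$ and the segment ending at $\pi(1)$. It is also true that a segment between points of comparable cells $\alpha\peq\beta$ lies in $\tilde\alpha\cup\tilde\beta$.

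\textbf{The gap.} It is the middle step, which you call the main obstacle but do not overcome. The star cover does not make consecutive cells face-related. Two cells of $\mathrm{st}(\tau_i)$ share the face $\tau_i$ but need not be comparable with each other, and if $\tilde\tau_i\not\subseteq\val{\varphi}$ you cannot pass through $\tilde\tau_i$.

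For example, let $v$ be a vertex surrounded by four triangles $T_1,\dots,T_4$ in cyclic order, with edges $e_{12},e_{23},\ldots$ from $v$. Let $\val{\varphi}=\tilde T_1\cup\tilde{e}_{12}\cup\tilde T_2\cup\tilde{e}_{23}\cup\tilde T_3$. A path from $T_1$ to $T_3$ inside $\val{\varphi}$ lies entirely in $\mathrm{st}(v)$, so your Lebesgue step may assign it to $\tau_i=v$ throughout. Yet $T_1$ and $T_3$ are not face-related, and the straight segment between samples in them can pass through $v$ or $\tilde T_4$.

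Your remedy (one sample per visited cell, joined consecutively) presupposes that the samples can be ordered so that consecutive ones are face-related. For a general continuous $\pi$ there is not even a well-defined finite sequence of visited cells: $\pi$ may cross between an edge and an adjacent triangle infinitely often. So you assert, but do not prove, that a finite chain of pairwise comparable cells inside $\val{\varphi}$ links the start cell to the end cell.

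\textbf{The fix.} Use open stars of cells that $\pi$ actually visits, rather than a cover of all of $P$. Choose $t_1<t_2$ in $(0,1)$ as for your end segments. For $t\in[t_1,t_2]$ let $\sigma(t)$ be the simplex whose open cell contains $\pi(t)$; then $\tilde\sigma(t)\subseteq\val{\varphi}$. The sets $\pi^{-1}(\mathrm{st}(\sigma(t)))$ form an open cover of $[t_1,t_2]$. A Lebesgue partition $t_1=s_0<\dots<s_N=t_2$ then gives $\pi([s_{j-1},s_j])\subseteq\mathrm{st}(\sigma_j)$ with $\tilde\sigma_j\subseteq\val{\varphi}$.

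Now $\pi(s_{j-1})$ and $\pi(s_j)$ lie in open cells of cofaces of $\sigma_j$. The segments from them to the barycentre of $\sigma_j$ therefore stay in those open cells together with $\tilde\sigma_j$, hence in $\val{\varphi}$. This removes the obstacle entirely. Equivalently, the comparability-chain class, among cells in $\val{\varphi}$, of the cell of $\pi(t)$ is locally constant in $t$, hence constant.

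\textbf{A smaller point.} You justify that $\val{\varphi}$ and $\val{\psi}$ are polyhedral by closure of $\Poly{P}$ under $\gamma$. In the paper, that closure is stated as a consequence of this very lemma, so the argument is circular. It is easily repaired. Take a triangulation in which $A$ and $B$ are unions of open cells. The corrected argument shows that whether $x\in\gamma(A,B)$ depends only on the open cell containing $x$. So a simultaneous induction over one triangulation refining the valuations of the atoms yields both facts.
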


It follows that if $A, B \in \Poly{P}$, then $\val{\gamma(A, B)} \in \Poly{P}$.

By combining Lemmas~\ref{lem:plpreimage},~\ref{lem:PL paths},~\ref{lem:PL gamma} and Proposition~\ref{prop:closurealgebra} we obtain that the structure $(P,\Poly{P}, f)$ with $P$ a polyhedron and $f$ a PL-homeomorphism is actually a special kind of an h-dynamic reachability space.

\begin{defi}[H-dynamic Polyhedral Space]
An \emph{h-dynamic polyhedral space} is a tuple $(P,\Poly{P}, f)$ where $P$ is a polyhedron and $f: P \to P$ is a PL homeomorphism.
\end{defi}

\section{The Axiomatic Systems}
The $\mathcal{L}_{\nc, \gamma}$-logics of topological spaces, of finite topological spaces and of polyhedral reachability algebras have been recently axiomatized. We recall the corresponding axiomatic systems and then extend them with axioms dealing with the dynamics, in the full language $\mathcal L$.

The reachability logic of finite topological spaces is axiomatized in \cite{PolyCompleteness}. The corresponding logic $\lgc{ALR}$ extends the modal system $\lgc{S4}$ with specific axioms and rules governing the behaviour of the reachability operator.

The system $\lgc{ALR}$ contains all propositional tautologies and the following axioms:
\begin{itemize}
    \item[\normal{K.}] $\nc(\varphi \to \psi) \to (\nc \varphi \to \nc \psi)$
    \item[\normal{T.}] $\nc \varphi \to \varphi$
    \item[\normal{4.}] $\nc \varphi \to \nc \nc \varphi$
    \item[\normal{$\gamma$1.}] $\ \ \psi \lor (\varphi \land \gamma(\varphi, \psi)) \to \nc(\varphi \to \gamma(\varphi, \psi))$
    \item[\normal{$\gamma$2.}] $\ \Diamond(\varphi \land \gamma(\varphi, \psi)) \to \gamma(\varphi, \psi)$
\end{itemize}
The rules of inference for $\lgc{ALR}$ are Modus Ponens, Necessitation $\left(\frac{\varphi}{\nc \varphi}\right)$, and two rules for reachability:
\[
\text{(R1) } \frac{\varphi \to \varphi' \quad \psi \to \psi'}{\gamma(\varphi, \psi) \to \gamma(\varphi', \psi')} \qquad\qquad \text{(R2) } \frac{\psi \to \nc(\varphi \to \psi) \quad \varphi \land \Diamond(\varphi \land \psi) \to \psi}{\gamma(\varphi, \psi) \to \nc(\varphi \land \psi)}
\]

The Polyhedral Logic of Reachability $\lgc{PLR}$ extends $\lgc{ALR}$ by the following axiom of Grzegorczyk:
\[\mathrm{Grz.   }\ \ \ \ \nc(\nc(\varphi \to \nc \varphi) \to \varphi) \to \nc \varphi\]

Completeness of $\lgc{PLR}$ and $\lgc{ALR}$ has been established in~\cite{PolyCompleteness}.

The reachability logic of all topological spaces has recently been axiomatized in~\cite{gagarin}. The corresponding system $\lgc{TLR}$ (Topological Logic of Reachability) is defined by all axioms and rules of $\lgc{S4}$ for $\nc$ together with the following:
\begin{itemize}
    \item[\normal{A0.}] $\ \ \gamma(\varphi,\psi_1 \lor \psi_2) \to \gamma(\varphi, \psi_1) \lor \gamma(\varphi \lor \psi_2)$
    \item[\normal{A1.}] $\ \ \gamma(\varphi, \varphi \land \gamma(\varphi, \psi) \to \gamma(\varphi,\psi))$
    \item[\normal{A2.}] $\ \ \varphi \to \gamma(\varphi,\varphi)$
    \item[\normal{A3.}] $\ \ \lnot \gamma(\bot, \top)$
    \item[\normal{A4.}] $\ \ \gamma(\varphi,\lnot \gamma(\varphi, \psi)) \to \lnot \psi$
    \item[\normal{A5.}] $\ \ \gamma(\varphi,\psi) \to \gamma(\varphi \land \gamma(\varphi, \psi), \psi)$
    \item[\normal{A6.}] $\ \ \nc \psi \land \gamma(\varphi, \chi) \to \gamma(\varphi \land \nc \psi, (\varphi \land \lnot \nc \psi) \lor \chi)$
    \item[\normal{A7.}] $\ \ \hat{\gamma}(\chi \land (\nc \varphi_1 \lor \varphi_2) \land (\hat{\gamma}(\chi \land \nc \varphi_1, \psi) \lor \hat{\gamma}(\chi \land \nc \varphi_2,\psi) \to \psi), \psi) \to \psi$
    \item[\normal{A8.}] $\ \ \gamma(\varphi \land \nc \lnot \varphi, \top) \to \varphi$
    \item[\normal{R1.}] $\ \ \ \ \ \ \frac{\varphi \to \varphi' \quad \psi \to \psi'}{\gamma(\varphi, \psi) \to \gamma(\varphi', \psi')}$
\end{itemize}

where $\hat{\gamma}(\varphi,\psi) := \varphi \land \gamma(\varphi, \varphi \land \psi)$.

\begin{defi}

    Let $\mathfrak T$ denote the class of all  reachability spaces $(X,2^X)$;

Let $\mathfrak F$ denote the class of all \emph{finite} reachability spaces;

Let $\mathfrak A$ denote the class of all \emph{Alexandroff} reachability spaces;

Let $\mathfrak P$ denote the class of all \emph{polyhedral} reachability spaces $(P,\Poly{P})$.
\end{defi}

The following holds:

\begin{thm}\label{t:comp_no-hD}
\quad
\begin{enumerate}
\item $\sf TLR$ is sound and complete for $\mathfrak T$ \cite{gagarin}.

\item $\sf ALR$ is sound and complete both for $\mathfrak F$ and for $\mathfrak A$ \cite{PolyCompleteness}.
\item $\sf PLR$ is sound and complete for $\mathfrak P$ \cite{PolyCompleteness}.

\end{enumerate}
\end{thm}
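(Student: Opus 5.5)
Theorem~\ref{t:comp_no-hD} collects results already established in the literature, so the plan is largely to assemble them. Each item has two halves. Soundness is a direct check of each axiom and rule over the relevant class. Completeness is taken from the cited works: \cite{gagarin} for $\lgc{TLR}$ and \cite{PolyCompleteness} for $\lgc{ALR}$ and $\lgc{PLR}$.

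For soundness, I would verify the $\lgc{S4}$ part first. It holds because $\IntOp$ is an interior operator on any space and, by Proposition~\ref{prop:closurealgebra}, on $\Poly{P}$. For the reachability axioms I would argue with paths.
\begin{itemize}
\item (R1) is monotonicity of $\gamma$.
\item $\gamma1$ and $\gamma2$ follow by concatenating a path with a short path inside an open neighbourhood. In finite and Alexandroff spaces the minimal open neighbourhood of $x$ is path-connected via specialization, so any point of $\nc(\cdot)$-region near $x$ can be joined to $x$. In polyhedra, open stars are path-connected.
\item (R2) is an induction-style rule. I would prove it by taking a path witnessing $\gamma(\varphi,\psi)$ and showing that the set of parameters $t$ at which $\varphi\land\psi$ holds locally is open, closed and nonempty near $1$. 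Connectedness of $[0,1]$ does the work here.
\item In the polyhedral case I would use Lemma~\ref{lem:PL gamma} to restrict to PL paths. Such a path meets only finitely many open cells, which reduces (R2) and Grz to finite poset reasoning on the face poset of a suitable triangulation.
\item Grz holds on polyhedra because $\Poly{P}$ is generated by open cells whose face poset is finite and partially ordered, hence Noetherian.
\end{itemize}
The $\lgc{TLR}$ axioms A0--A8 are checked analogously on arbitrary spaces. The delicate ones are A6, A7 and A8, which encode local connectedness phenomena; for these I would defer to \cite{gagarin}.

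For completeness, I would invoke the cited results directly. For $\lgc{ALR}$, \cite{PolyCompleteness} gives a finite-model-property argument through finite posets, regarded as finite Alexandroff spaces. This covers $\mathfrak F$, and since $\mathfrak F\subseteq\mathfrak A$ and soundness holds on $\mathfrak A$, completeness transfers to $\mathfrak A$. For $\lgc{PLR}$, a non-theorem is refuted on a finite Grz-poset. That poset is then realized as the face poset of a triangulation of a polyhedron (via nerve and cone constructions), with valuations sent to unions of open cells. The final step is to check that the map $P\to$ face poset preserves $\gamma$, using Lemma~\ref{lem:PL gamma}.

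I expect the main obstacle to be the polyhedral realization step: showing that an arbitrary finite rooted Grz frame validating the $\gamma$-axioms is a p-morphic image of some triangulated polyhedron for both $\nc$ and $\gamma$. Here I would rely wholesale on \cite{PolyCompleteness} rather than reprove it.
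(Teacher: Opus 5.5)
Your plan matches the paper's treatment. The paper gives no argument for this theorem and takes both soundness and completeness from \cite{gagarin} and \cite{PolyCompleteness}. Your transfer of completeness from $\mathfrak F$ to $\mathfrak A$, using $\mathfrak F\subseteq\mathfrak A$ together with soundness on $\mathfrak A$, is also correct.

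The one place where your hand-made soundness check does not deliver what it claims is (R2). Let $S=\{t\in(0,1):\pi(t)\in\val{\psi}\}$.
\begin{enumerate}
\item The first premise, applied at $\pi(1)$ and at points of $S$, makes $S$ nonempty near $1$ and open.
\item The second premise needs $\pi(t)\in\val{\varphi}$, so it only works inside $(0,1)$, not on $[0,1]$. There it makes $S$ relatively closed.
\item Hence $\pi((0,1))\subseteq\val{\varphi\land\psi}$.
\end{enumerate}
All you learn about $x=\pi(0)$ is therefore $x\models\Diamond(\varphi\land\psi)$.

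Nothing stronger is available, because the printed conclusion $\gamma(\varphi,\psi)\to\nc(\varphi\land\psi)$ is not valid. Take $\varphi=\psi=p$. Both premises are then valid outright. The constant path gives $p\to\gamma(p,p)$, so the conclusion would yield $p\to\nc p$. This fails in the two-point Sierpi\'nski space at the closed point, and in $[0,1]$ with $p=[0,\tfrac12]$ at the point $\tfrac12$. So the rule has to be read with conclusion $\gamma(\varphi,\psi)\to\Diamond(\varphi\land\psi)$; check the formulation in \cite{PolyCompleteness}. That corrected rule is exactly what your connectedness argument proves.

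A smaller precision point concerns $\gamma1$ and $\gamma2$. Path-connectedness of the minimal neighbourhood or of the open star is not what matters, since the connecting path must have its interior inside $\val{\varphi}$. What works is a specific path in each case:
\begin{itemize}
\item In the Alexandroff case, take the path that is constantly $y$ except at one endpoint, where it takes the value $x$. It is continuous precisely because $x$ lies in the closure of $\{y\}$.
\item In the polyhedral case, fix a triangulation adapted to the valuation. Take the straight segment between $x\in\sigma$ and $y\in\tilde\sigma$; all of its points other than $x$ lie in $\tilde\sigma$.
\end{itemize}
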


A variant of dynamic polyhedral logic without a reachability operator was introduced in~\cite{Kopnev},  where the temporal dynamics was modeled using relations. We now define the classes of h-dynamic structures corrisponding to the considered logics.

\begin{defi}
    Let $\mathfrak{hDT}$ denote the class of all  h-dynamic reachability spaces $(X,2^X,f)$, where $f:X\to X$ is a homeomorphism;

Let $\mathfrak {hDF}$ denote the class of all \emph{finite} h-dynamic reachability spaces;

Let $\mathfrak{hDA}$ denote the class of all \emph{Alexandroff} h-dynamic reachability spaces;

Let $\mathfrak{hDP}$ denote the class of all  \emph{h-dynamic polyhedral} spaces $(P,\Poly{P},f)$ where $f:P\to P$ is a PL-homeomorphism.
\end{defi}
To axiomatize the logics of these classes in the language $\mathcal L$ we add the temporal and spatio-temporal axioms to the corresponding logics.

\begin{defi}\label{d:hD axioms}
Given a logic $\Lambda \in \{ \lgc{TLR, ALR, PLR} \}$ in the language $\mathcal{L}_{\nc, \gamma}$, we define by $\dyn{\Lambda}$ the h-dynamic extension of $\Lambda$ in the language $\mathcal{L}$, obtained by adding the following  set of axioms and rules:
\begin{itemize}
    \item[\normal{$\FCircleOp$1.}] $\ \ \ \lnot \FCircleOp \varphi \leftrightarrow \FCircleOp \lnot \varphi$
    \item[\normal{$\FCircleOp$2.}] $\ \ \ \FCircleOp(\varphi \land \psi) \leftrightarrow \FCircleOp \varphi \land \FCircleOp \psi$
    \item[\normal{$\FCircleOp$3.}] $\ \ \ \gamma(\FCircleOp \varphi, \FCircleOp \psi) \to \FCircleOp \reach{\varphi}{\psi}$
    \item[\normal{$\FCircleOp$4.}] $\ \ \ \nc\FCircleOp\varphi\to \FCircleOp\nc\varphi$

    \item[\normal{$\BCircleOp$1.}] $\ \ \ \lnot \BCircleOp \varphi \leftrightarrow \BCircleOp \lnot \varphi$
    \item[\normal{$\BCircleOp$2.}] $\ \ \ \BCircleOp(\varphi \land \psi) \leftrightarrow \BCircleOp \varphi \land \BCircleOp \psi$
    \item[\normal{$\BCircleOp$3.}] $\ \ \ \gamma(\BCircleOp \varphi, \BCircleOp \psi) \to \BCircleOp \reach{\varphi}{\psi}$
    \item[\normal{$\BCircleOp$4.}]$\ \ \ \nc\BCircleOp\varphi\to \BCircleOp\nc\varphi$

    \item[\normal{$\FCircleOp\BCircleOp$1.}]\ \ \ \ \ 
    $\FCircleOp\BCircleOp\varphi\leftrightarrow\varphi$

    \item[\normal{$\BCircleOp\FCircleOp$1.}]\ \ \ \ \ 
    $\BCircleOp\FCircleOp\varphi\leftrightarrow\varphi$

    \item[\normal{$\FCircleOp$N}] $\ \ \ \ \ \frac{\varphi}{\FCircleOp\varphi}$

    \item[\normal{$\BCircleOp$N}] $\ \ \ \ \ \frac{\varphi}{\BCircleOp\varphi}$

\end{itemize}
\end{defi}

To show soundness of these systems, we need a couple of lemmas.

\begin{lemm}\label{lem:sound_homeo}
Let $\mathcal X=(X,\mathcal A,f)$ be an h-dynamic reachability space. Then:
\begin{itemize}
    \item $\mathcal X \models \gamma(\FCircleOp \varphi, \FCircleOp \psi) \to \FCircleOp \reach{\varphi}{\psi}$,
    
    \item $ \mathcal X\models \gamma(\BCircleOp \varphi, \BCircleOp \psi) \to \BCircleOp \reach{\varphi}{\psi}$,

    \item $ \mathcal X\models \nc\FCircleOp\varphi\to \FCircleOp\nc\varphi$,

    \item $ \mathcal X\models \nc\BCircleOp\varphi\to \BCircleOp\nc\varphi$.
\end{itemize}
\end{lemm}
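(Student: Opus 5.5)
We argue semantically, fixing an arbitrary valuation $V$ into $\mathcal A$ and writing $A=\val{\varphi}$, $B=\val{\psi}$. By the clauses for the temporal operators, $\val{\FCircleOp\varphi}=f^{-1}[A]$ and $\val{\BCircleOp\varphi}=f[A]$. All four items then become set-theoretic statements about how $f$ and $f^{-1}$ interact with $\gamma$ and $\IntOp$. Since $f$ is a homeomorphism, $f^{-1}$ is one as well. So the second and fourth items follow from the first and third by applying them to the h-dynamic reachability space $(X,\mathcal A,f^{-1})$, in which $\FCircleOp$ and $\BCircleOp$ swap roles. This needs the closure condition of the definition to hold for $f^{-1}$ too, which it does, since $A\in\mathcal A$ iff $f[A]\in\mathcal A$ is symmetric under substituting $A=f^{-1}[C]$. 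Hence we only treat the $\FCircleOp$-cases.

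For the first item, we show $\gamma(f^{-1}[A],f^{-1}[B])\subseteq f^{-1}[\gamma(A,B)]$. Take $x$ with a path $\pi\colon[0,1]\to X$ such that $\pi(0)=x$, $\pi((0,1))\subseteq f^{-1}[A]$ and $\pi(1)\in f^{-1}[B]$. Put $\pi'=f\circ\pi$. This is continuous because $f$ is continuous, and it satisfies $\pi'(0)=f(x)$, $\pi'((0,1))\subseteq A$ and $\pi'(1)\in B$. Therefore $f(x)\in\gamma(A,B)$, i.e.\ $x\models\FCircleOp\gamma(\varphi,\psi)$. Only continuity of $f$ is used here.

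For the third item, we show $\IntOp(f^{-1}[A])\subseteq f^{-1}[\IntOp(A)]$. If $x\in U\subseteq f^{-1}[A]$ with $U$ open, then $f[U]$ is open because $f$ is a homeomorphism, hence an open map. Moreover $f(x)\in f[U]\subseteq A$, so $f(x)\in\IntOp(A)$. This is the only place where invertibility (openness of $f$) is needed, and it is the one step worth flagging. Apart from it the argument is routine. The membership of all relevant sets in $\mathcal A$ is not needed for the inclusions themselves; it only guarantees that the truth sets are admissible, which follows from the closure conditions on $\mathcal A$ and on $f$.
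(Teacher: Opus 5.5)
Your proposal is correct and follows essentially the same route as the paper. Both use the composed path $\pi'=f\circ\pi$ for the $\gamma$-axiom, both get the backward cases by symmetry, and both derive the interior axioms from the continuity of $f^{-1}$, which is the openness of $f$; you simply spell out the symmetry and interior steps in more detail than the paper does.
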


\begin{proof}
We only consider the first claim, since the second one is symmetric and the last two are easily seen to follow from $f$ and $f^{-1}$ being continuous maps.

Suppose $\mathcal X, x \models \gamma(\FCircleOp\varphi, \FCircleOp\psi)$ for a valuation $\val{\cdot}$. Then there exists a topological path $\pi: [0,1] \to X$ such that $\pi(0) = x$, $\pi(1) \in \val{\FCircleOp\psi}$, and $\pi((0,1)) \subseteq \val{\FCircleOp\varphi}$. By definition, $f(\pi(1)) \in \val{\psi}$ and $f(\pi((0,1))) \subseteq \val{\varphi}$. Consider the path $\pi' = f \circ \pi$. Since $f$ is continuous, $\pi'$ is a continuous topological path. We have $\pi'(0) = f(x)$, $\pi'(1) \in \val{\psi}$, and $\pi'((0,1)) \subseteq \val{\varphi}$, meaning $f(x) \in \val{\reach{\varphi}{\psi}}$. Thus $x \in \val{\FCircleOp\reach{\varphi}{\psi}}$.
\end{proof}

\begin{lemm}\label{lemHomAx}
Let $\Lambda$ be an h-dynamic logic containing the axioms and rules of Definition~\ref{d:hD axioms}. Then:
\begin{itemize}
\item $\Lambda\vdash \gamma(\FCircleOp \varphi, \FCircleOp \psi) \leftrightarrow \FCircleOp \reach{\varphi}{\psi}$,

\item $\Lambda\vdash \gamma(\BCircleOp \varphi, \BCircleOp \psi) \leftrightarrow \BCircleOp \reach{\varphi}{\psi}$

\item $\Lambda\vdash\nc\FCircleOp\varphi\leftrightarrow \FCircleOp\nc\varphi$,

\item $\Lambda\vdash\nc\BCircleOp\varphi\leftrightarrow \BCircleOp\nc\varphi$.
\end{itemize}
\end{lemm}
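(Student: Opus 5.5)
In each item the forward implication is already an axiom ($\FCircleOp$3, $\BCircleOp$3, $\FCircleOp$4, $\BCircleOp$4), so only the converses need proof. The plan is to obtain each converse by applying the axiom for the \emph{opposite} temporal operator and then cancelling with $\FCircleOp\BCircleOp$1 and $\BCircleOp\FCircleOp$1.

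First we record two auxiliary facts. By $\FCircleOp$1, $\FCircleOp$2 and the rule $\FCircleOp$N, the operator $\FCircleOp$ is a normal, self-dual (functional) modality. In particular, if $\vdash\alpha\to\beta$ then $\vdash\FCircleOp\alpha\to\FCircleOp\beta$. To see this, from $\vdash\alpha\to\beta$ get $\vdash\FCircleOp\lnot(\alpha\land\lnot\beta)$ by $\FCircleOp$N. Then push $\FCircleOp$ inside using $\FCircleOp$1 and $\FCircleOp$2 to get $\FCircleOp\alpha\to\FCircleOp\beta$. Likewise $\FCircleOp$ preserves provable equivalences, and the same holds for $\BCircleOp$. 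We also need replacement of provable equivalents inside $\gamma$ and $\nc$. For $\gamma$ this is rule R1, applied in both directions; in $\lgc{TLR}$ R1 is present, and in $\lgc{ALR}$/$\lgc{PLR}$ it is (R1). For $\nc$ it follows from Necessitation and K.

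For the first item, we prove $\FCircleOp\reach{\varphi}{\psi}\to\gamma(\FCircleOp\varphi,\FCircleOp\psi)$. Instantiate $\BCircleOp$3 with $\FCircleOp\varphi,\FCircleOp\psi$. This gives $\gamma(\BCircleOp\FCircleOp\varphi,\BCircleOp\FCircleOp\psi)\to\BCircleOp\gamma(\FCircleOp\varphi,\FCircleOp\psi)$. By $\BCircleOp\FCircleOp$1 and R1, the antecedent is equivalent to $\gamma(\varphi,\psi)$, so $\vdash\gamma(\varphi,\psi)\to\BCircleOp\gamma(\FCircleOp\varphi,\FCircleOp\psi)$. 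Applying monotonicity of $\FCircleOp$ gives $\FCircleOp\gamma(\varphi,\psi)\to\FCircleOp\BCircleOp\gamma(\FCircleOp\varphi,\FCircleOp\psi)$. Then $\FCircleOp\BCircleOp$1 collapses the consequent to $\gamma(\FCircleOp\varphi,\FCircleOp\psi)$. Combining this with $\FCircleOp$3 yields the biconditional. The second item is symmetric: use $\FCircleOp$3 instantiated at $\BCircleOp\varphi,\BCircleOp\psi$, then apply $\BCircleOp$ and cancel with $\BCircleOp\FCircleOp$1.

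The third and fourth items follow the same pattern. Instantiating $\BCircleOp$4 with $\FCircleOp\varphi$ gives $\nc\BCircleOp\FCircleOp\varphi\to\BCircleOp\nc\FCircleOp\varphi$. Replacing $\BCircleOp\FCircleOp\varphi$ by $\varphi$ under $\nc$ gives $\nc\varphi\to\BCircleOp\nc\FCircleOp\varphi$. Applying $\FCircleOp$ and then $\FCircleOp\BCircleOp$1 gives $\FCircleOp\nc\varphi\to\nc\FCircleOp\varphi$, which together with $\FCircleOp$4 proves the third item. The fourth is symmetric.

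There is no real obstacle here. The only point needing care is the preliminary step: checking that the substitutions of equivalents under $\gamma$, $\nc$, $\FCircleOp$ and $\BCircleOp$ are justified by the rules available in each base logic. In particular, in $\lgc{TLR}$ replacement under $\gamma$ must come from its rule R1 rather than from a general congruence principle.
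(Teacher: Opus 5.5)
Your proposal is correct and is essentially the paper's argument. For each converse you instantiate the opposite-direction axiom ($\BCircleOp$3, resp.\ $\BCircleOp$4, and symmetrically $\FCircleOp$3, $\FCircleOp$4) at shifted arguments, then cancel the shifts with $\FCircleOp\BCircleOp$1 and $\BCircleOp\FCircleOp$1. The only difference is that you explicitly justify the replacement-of-equivalents and monotonicity steps (via R1, Necessitation with K, and $\FCircleOp$N with $\FCircleOp$1--$\FCircleOp$2), which the paper uses tacitly.
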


\begin{proof}
Let us first focus on $\gamma(\FCircleOp \varphi, \FCircleOp \psi) \leftrightarrow \FCircleOp \reach{\varphi}{\psi}$ and note that we only need to show $\FCircleOp \reach{\varphi}{\psi} \to \gamma(\FCircleOp \varphi, \FCircleOp \psi) $, as the other implication is an instance of \normal{$\FCircleOp$3}.
The derivation is as follows:
\begin{enumerate}
    \item Assume $\FCircleOp \reach{\varphi}{\psi}$.
    \item By \normal{$\BCircleOp\FCircleOp$1}, this is equivalent to $\FCircleOp \reach{\BCircleOp\FCircleOp\varphi}{\BCircleOp\FCircleOp\psi}$.
    \item By \normal{$\BCircleOp$3}, this implies $\FCircleOp \BCircleOp\reach{ \FCircleOp\varphi}{ \FCircleOp\psi}$.
    \item By \normal{$\FCircleOp\BCircleOp$1}, this is equivalent to $  \reach{ \FCircleOp\varphi}{ \FCircleOp\psi}$.
\end{enumerate}

Similarly, assuming $\FCircleOp\nc\varphi$, we use \normal{$\BCircleOp\FCircleOp$1} to get $\FCircleOp\nc\BCircleOp\FCircleOp\varphi$, then derive $\FCircleOp\BCircleOp\nc\FCircleOp\varphi$ by \normal{$\BCircleOp$4} and finally, use \normal{$\FCircleOp\BCircleOp$1} to arrive at $\nc\FCircleOp\varphi$. Hence $\Lambda\vdash\FCircleOp\nc\varphi\to\nc\FCircleOp\varphi$. The converse implication is simply an instance of \normal{$\FCircleOp$4}.

The other two cases are treated in a symmetric fashion.
\end{proof}

\begin{thm}\label{thm:soundness}
The axiomatic systems $\dyn{\lgc{TLR}}$ $\dyn{\lgc{ALR}}$ are sound with respect to the classes $\mathfrak {hDT}$, $\mathfrak {hDF}$ and $\mathfrak {hDA}$, while
the axiomatic system $\dyn{\lgc{PLR}}$ is sound with respect to the class $\mathfrak {hDP}$.
\end{thm}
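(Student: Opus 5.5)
Soundness goes by the usual induction on derivations: every axiom is valid in every model over the relevant class, and every rule preserves validity over that class. The statement has a typo: $\dyn{\lgc{TLR}}$ should be sound for $\mathfrak{hDT}$, and $\dyn{\lgc{ALR}}$ for $\mathfrak{hDF}$ and $\mathfrak{hDA}$. Since $\mathfrak{hDF}\subseteq\mathfrak{hDA}\subseteq\mathfrak{hDT}$ and $\lgc{TLR}\subseteq\lgc{ALR}$ (as Theorem~\ref{t:comp_no-hD} implies), proving the $\mathfrak{hDT}$ case for the $\lgc{TLR}$ part also covers the smaller classes.

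\textbf{Spatial part.} The axioms and rules of $\Lambda\in\{\lgc{TLR},\lgc{ALR},\lgc{PLR}\}$ are now applied to formulas of the full language $\mathcal L$, so they are substitution instances with temporal subformulas. These need no new work. In an h-dynamic reachability model $(X,\mathcal A,f,V)$, every truth set $\val{\varphi}$ lies in $\mathcal A$. This is by induction on $\varphi$: the Boolean, $\IntOp$ and $\gamma$ cases use that $\mathcal A$ is a reachability algebra. For the temporal cases, $\val{\FCircleOp\varphi}=f^{-1}[\val{\varphi}]$ and $\val{\BCircleOp\varphi}=f[\val{\varphi}]$, which stay in $\mathcal A$ because $A\in\mathcal A$ iff $f[A]\in\mathcal A$ and $f$ is a bijection. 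Given an instance $\theta(\chi_1,\dots,\chi_n)$ of an $\mathcal L_{\nc,\gamma}$-scheme, define the valuation $V'(p_i)=\val{\chi_i}\in\mathcal A$. A routine induction shows $\val{\theta(\chi_1,\dots,\chi_n)}$ equals the truth set of $\theta(p_1,\dots,p_n)$ in the reachability model $(X,\mathcal A,V')$. That model lies in $\mathfrak T$, $\mathfrak F$, $\mathfrak A$ or $\mathfrak P$ respectively, so soundness from Theorem~\ref{t:comp_no-hD} gives validity. The rules R1, R2 and Necessitation are handled the same way, applied to this single model. The premises of a rule hold in the given model, and the rule is sound on it: for R1 and Necessitation this is immediate; for R2 it follows from the soundness proof in \cite{PolyCompleteness}.

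\textbf{Temporal part.} The axioms $\FCircleOp$1, $\FCircleOp$2, $\BCircleOp$1 and $\BCircleOp$2 hold because $f$ and $f^{-1}$ are total functions. $\FCircleOp\BCircleOp$1 and $\BCircleOp\FCircleOp$1 hold because $f(f^{-1}(x))=x=f^{-1}(f(x))$. $\FCircleOp$3, $\BCircleOp$3, $\FCircleOp$4 and $\BCircleOp$4 are exactly Lemma~\ref{lem:sound_homeo}. The rules $\FCircleOp$N and $\BCircleOp$N preserve validity: if $\val{\varphi}=X$ then $f^{-1}[X]=X$ and $f[X]=X$.

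\textbf{Polyhedral case.} Here we must check that each $(P,\Poly{P},f)$ in $\mathfrak{hDP}$ really is an h-dynamic reachability space. $\Poly{P}$ is a reachability algebra by Proposition~\ref{prop:closurealgebra} together with Lemmas~\ref{lem:PL paths} and~\ref{lem:PL gamma}. Invariance under $f$ follows from Lemma~\ref{lem:plpreimage} applied to $f^{-1}$ and to $f$, since the inverse of a PL homeomorphism is PL: $A\in\Poly{P}$ gives $f[A]=(f^{-1})^{-1}(A)\in\Poly{P}$, and conversely. With this in place the spatial and temporal arguments above apply verbatim, now using soundness of $\lgc{PLR}$ for $\mathfrak P$.

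The only real obstacle is the admissibility bookkeeping in the spatial part. One has to make sure the substitution argument lands in a model of the right class: the valuation must stay in $\mathcal A$, and $\Poly{P}$ must be closed under $f$. The other checks are immediate.
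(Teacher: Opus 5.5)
Your proposal is correct and follows the paper's proof, which simply cites Theorem~\ref{t:comp_no-hD} for the spatial axioms and rules, the functionality of $f$ for the Boolean temporal axioms, and Lemma~\ref{lem:sound_homeo} for the interaction axioms. Your substitution argument with $V'(p_i)=\val{\chi_i}\in\mathcal A$ and your check that $\Poly{P}$ is invariant under $f$ spell out what the paper leaves implicit. One caveat, shared with the paper: R2 as printed, with conclusion $\gamma(\varphi,\psi)\to\nc(\varphi\land\psi)$, is not sound. Take $\varphi=\psi=p$ with $\val{p}=[0,1]$ in $P=[-1,1]$; it fails at the point $0$. So your appeal to \cite{PolyCompleteness} for R2 must be to the rule as correctly stated there. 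This is a transcription slip in the paper, not a gap in your argument.
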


\begin{proof}
Soundness of the pure spatial axioms and rules follows from Theorem~\ref{t:comp_no-hD}. The functionality of $f$ ensures that $\FCircleOp$ commutes with boolean connectives. Soundness of the interaction axioms follows from Lemma~\ref{lem:sound_homeo}.
\end{proof}

In the next section we establish our main findings regarding the completeness of these axiomatic systems.

\section{Completeness of h-Dynamic Reachability Logics}

We show that our h-dynamic reachability logics are complete for their
respective classes of models by reducing formulas to equivalent
reachability configurations through a translation function that pushes
temporal steps to the atomic layer.

By the axioms $\FCircleOp\BCircleOp 1$ and
$\BCircleOp\FCircleOp 1$ of Definition~\ref{d:hD axioms}, every finite
sequence of forward and backward temporal operators is provably
equivalent to a sequence consisting entirely of forward operators or
entirely of backward operators. Its length and direction are determined
by the net temporal displacement. As usual, for $n\geq 0$, we let
$\FCircleOp^n\varphi$ denote $\varphi$ preceded by $n$ instances of
$\FCircleOp$. We however also allow for $n<0$, in which case we let $\FCircleOp^n\varphi$ denote $\varphi$
preceded by $-n$ instances of $\BCircleOp$. With this convention, the
following transformation sends every formula of $\mathcal L$ to a
provably equivalent formula in which temporal operators occur only in
front of propositional variables.

\begin{defi}
Let $\varphi\in\mathcal L$. The structural transformation
$g(\varphi)$ is defined inductively as follows:
\begin{itemize}
    \item
    $g(\FCircleOp^n p)=\FCircleOp^n p$
    for $p\in\Prop$;

    \item
    $g(\FCircleOp^n\lnot\varphi)
       =\lnot g(\FCircleOp^n\varphi)$;

    \item
    $g\bigl(\FCircleOp^n(\varphi\land\psi)\bigr)
       =g(\FCircleOp^n\varphi)
        \land g(\FCircleOp^n\psi)$;

    \item
    $g(\FCircleOp^n\nc\varphi)
       =\nc g(\FCircleOp^n\varphi)$;

    \item
    $g\bigl(\FCircleOp^n\gamma(\varphi,\psi)\bigr)
       =\gamma\bigl(
          g(\FCircleOp^n\varphi),
          g(\FCircleOp^n\psi)
        \bigr)$;

    \item
    $g(\FCircleOp^n\BCircleOp\varphi)
       =g(\FCircleOp^{n-1}\varphi)$
    for $n>0$;

    \item
    $g(\FCircleOp^n\FCircleOp\varphi)
       =g(\FCircleOp^{n+1}\varphi)$
    for $n<0$.
\end{itemize}
Except in the last two clauses, $n$ may be any integer, including $0$.
In particular, taking $n=0$ shows that $g$ commutes with negation and
$\nc$, and distributes over conjunction and $\gamma$ (and hence over
the derived Boolean connectives).

For example,
\begin{align*}
g\bigl(\BCircleOp(\neg p\land\nc q)\bigr)
   &=\neg\BCircleOp p\land\nc\BCircleOp q,\\
g\bigl(\FCircleOp(p\to\BCircleOp q)\bigr)
   &=\FCircleOp p\to q,\\
g\bigl(\BCircleOp\gamma(p,\BCircleOp \Box q)\bigr)
   &=\gamma(\BCircleOp p,\Box\BCircleOp\BCircleOp q).
\end{align*}
The first example illustrates how temporal operators are pushed through
the spatial and Boolean structure. The second illustrates cancellation
of opposite temporal shifts, while the third illustrates accumulation
of shifts in the same direction. Thus, $g$ preserves the purely spatial
structure of a formula while combining the temporal shifts along each
branch into a single shift attached to the corresponding atom.

Formulas of the form $\FCircleOp^n p$ are called \emph{near-atoms}.
A formula is \emph{simple} if it belongs to the range of $g$.
Equivalently, a formula is simple iff it is built from near-atoms using
the Boolean connectives, $\nc$, and $\gamma$.
\end{defi}

Simple formulas can be parsed as standard formulas of the spatial language $\mathcal{L}_{\nc, \gamma}$ by treating near-atoms $\FCircleOp^n p$ as distinct propositional variables.

\begin{lemm}\label{lem:g_equivalence}
Let $\varphi\in\mathcal L$ and
$\Lambda\in\{\lgc{TLR},\lgc{ALR},\lgc{PLR}\}$. Then
\[
\dyn{\Lambda}\vdash\varphi
\quad\Longleftrightarrow\quad
\Lambda\vdash g(\varphi),
\]
where, on the right-hand side, the near-atoms occurring in
$g(\varphi)$ are regarded as distinct propositional variables.
\end{lemm}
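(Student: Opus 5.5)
We prove the two directions separately. The right-to-left direction is a provable-equivalence argument inside $\dyn{\Lambda}$. The left-to-right direction is an induction on $\dyn{\Lambda}$-derivations, showing that $g$ maps theorems to $\Lambda$-theorems.

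\emph{Step 1: $\dyn{\Lambda}\vdash \varphi\leftrightarrow g(\varphi)$ for every $\varphi\in\mathcal L$.} We first establish, by induction on $\varphi$ and simultaneously for all integers $n$, that $\dyn{\Lambda}\vdash \FCircleOp^n\varphi\leftrightarrow g(\FCircleOp^n\varphi)$. The cases are as follows.
\begin{itemize}
\item Atoms are trivial.
\item Negation and conjunction use $\FCircleOp$1, $\FCircleOp$2, $\BCircleOp$1 and $\BCircleOp$2, iterated $|n|$ times.
\item The $\nc$ and $\gamma$ cases use the biconditionals of Lemma~\ref{lemHomAx}, iterated $|n|$ times. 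Here we need congruence of $\FCircleOp$ and $\BCircleOp$ under provable equivalence, which follows from the rules $\FCircleOp$N and $\BCircleOp$N together with $\FCircleOp$1 and $\FCircleOp$2 (they make $\FCircleOp$ a normal operator). We also need congruence of $\nc$ (from S4) and of $\gamma$ (from R1, which belongs to all three base systems).
\item The cancellation clauses use $\FCircleOp\BCircleOp$1 and $\BCircleOp\FCircleOp$1.
\end{itemize}
The induction should be on the complexity of $\varphi$ with $n$ universally quantified, so that the clauses shifting $n$ by $\pm1$ remain covered. Given Step 1, if $\Lambda\vdash g(\varphi)$ we substitute the near-atoms back for the variables. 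This is a substitution instance, and every axiom and rule of $\Lambda$ is schematic, so $\dyn{\Lambda}\vdash g(\varphi)$. Step 1 then gives $\dyn{\Lambda}\vdash\varphi$.

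\emph{Step 2: if $\dyn{\Lambda}\vdash\varphi$ then $\Lambda\vdash g(\varphi)$.} We prove the stronger claim that $\Lambda\vdash g(\FCircleOp^n\varphi)$ for all $n\in\mathbb Z$, by induction on the derivation. The key preliminary fact is that $g(\FCircleOp^n\chi)$ is obtained from $g(\chi)$ by shifting every near-atom $\FCircleOp^k p$ to $\FCircleOp^{k+n}p$. This shift is an injective renaming of variables, so $\Lambda$-provability is preserved under shifting. We then check each axiom and rule.
\begin{itemize}
\item An axiom of $\Lambda$ is an instance $\alpha(\chi_1,\dots)$ of a schema. Its image is $\alpha(g(\FCircleOp^n\chi_1),\dots)$, because $g(\FCircleOp^n\cdot)$ commutes with all spatial connectives. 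This is again an instance of the same schema.
\item The temporal axioms $\FCircleOp$1--4, $\BCircleOp$1--4, $\FCircleOp\BCircleOp$1 and $\BCircleOp\FCircleOp$1 are mapped by $g$ to formulas of the form $\theta\leftrightarrow\theta$ or $\theta\to\theta$. For instance, both sides of $\FCircleOp$3 map to $\gamma(g(\FCircleOp^{n+1}\varphi),g(\FCircleOp^{n+1}\psi))$. These are tautologies.
\item Modus Ponens is preserved because $g$ commutes with $\to$.
\item Necessitation, R1 and R2 are preserved because $g(\FCircleOp^n\cdot)$ commutes with the spatial connectives, so the image of the premises yields the image of the conclusion by the same rule of $\Lambda$.
\item The rules $\FCircleOp$N and $\BCircleOp$N are handled by the induction hypothesis at $n\pm1$. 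This is why the claim must be quantified over all $n$.
\end{itemize}

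\emph{Main obstacle.} The delicate points are bookkeeping rather than ideas. First, we must verify carefully that $g$ is well defined and satisfies $g(\FCircleOp^n\chi)=\mathrm{shift}_n(g(\chi))$, given the clauses with mixed $\FCircleOp$ and $\BCircleOp$ prefixes. We would prove this identity by induction on $\chi$ before starting Step 2. Second, we must confirm that treating distinct near-atoms as distinct variables is compatible with the uniform substitution used in Step 1. For the rules R2 and R1 in Step 2, the only thing to check is that the side formulas are themselves images of formulas, which holds because $g$ acts homomorphically on the spatial connectives.
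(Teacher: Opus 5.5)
Your proposal is correct and follows the paper's proof. For the right-to-left direction you, like the paper, establish $\dyn{\Lambda}\vdash\varphi\leftrightarrow g(\varphi)$ from the temporal Boolean axioms, Lemma~\ref{lemHomAx} and the inverse axioms, and then substitute the near-atoms back in. For the left-to-right direction you use induction on derivations: temporal axioms are sent to tautologies, and temporal necessitation is handled by shifting near-atoms. The only differences are cosmetic. You strengthen the induction to all shifts $g(\FCircleOp^n\varphi)$, so that $\FCircleOp$N and $\BCircleOp$N follow from the induction hypothesis, where the paper uses uniform substitution directly. You are also more explicit about the congruence facts and about keeping $n$ universally quantified in the induction.
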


\begin{proof}
For the left-to-right implication, suppose that
$\dyn{\Lambda}\vdash\varphi$. We prove
$\Lambda\vdash g(\varphi)$ by induction on the length of a derivation
of $\varphi$ in $\dyn{\Lambda}$. Since $g$ preserves the Boolean and
spatial structure, it sends every axiom of $\Lambda$ to an instance of
the same axiom and preserves every inference rule of $\Lambda$.

The temporal axioms are sent to propositional tautologies. For example,
\[
g(\lnot\FCircleOp\psi\leftrightarrow
  \FCircleOp\lnot\psi)
=
\lnot g(\FCircleOp\psi)\leftrightarrow
\lnot g(\FCircleOp\psi),
\]
and
\[
\begin{split}
g\bigl(
  \gamma(\FCircleOp\psi,\FCircleOp\chi)
  \to\FCircleOp\gamma(\psi,\chi)
\bigr)
={}&
\gamma\bigl(g(\FCircleOp\psi),g(\FCircleOp\chi)\bigr)
\to
\gamma\bigl(g(\FCircleOp\psi),g(\FCircleOp\chi)\bigr).
\end{split}
\]
Likewise,
\[
g(\nc\FCircleOp\psi\to\FCircleOp\nc\psi)
=
\nc g(\FCircleOp\psi)\to\nc g(\FCircleOp\psi).
\]
The backward temporal axioms are treated symmetrically, and both
inverse axioms are sent to formulas of the form
$g(\psi)\leftrightarrow g(\psi)$.

Modus ponens is preserved because
\[
g(\psi\to\chi)=g(\psi)\to g(\chi),
\]
and $\nc$-necessitation is preserved because $g$ commutes with $\nc$.

It remains to consider temporal necessitation. Suppose $\FCircleOp\psi$ is inferred from $\psi$. By the induction hypothesis,
$\Lambda\vdash g(\psi)$. The formula $g(\FCircleOp\psi)$ is obtained
from $g(\psi)$ by uniformly replacing every near-atom
$\FCircleOp^k p$ by $\FCircleOp^{k+1}p$. Regarding near-atoms as
propositional variables and using closure under uniform substitution,
we obtain
\[
\Lambda\vdash g(\FCircleOp\psi).
\]
The case of $\BCircleOp$-necessitation is analogous: every near-atom
$\FCircleOp^k p$ is replaced by $\FCircleOp^{k-1}p$.

To prove the right-to-left implication, we first observe that
\[
\dyn{\Lambda}\vdash\varphi\leftrightarrow g(\varphi).
\tag{$*$}
\]
This is proved by induction on the construction of $\varphi$. The Boolean cases follow from the temporal Boolean axioms, while the $\nc$- and $\gamma$-cases follow from Lemma~\ref{lemHomAx}. Mixed temporal prefixes
are reduced using the inverse axioms
\normal{$\FCircleOp\BCircleOp 1$} and
\normal{$\BCircleOp\FCircleOp 1$}. Iterating these equivalences gives the result for arbitrary finite temporal prefixes.

Suppose now that $\Lambda\vdash g(\varphi)$, with its
near-atoms regarded as distinct propositional variables. Since
$\dyn{\Lambda}$ extends $\Lambda$ and is closed under uniform
substitution, it follows that
\[
\dyn{\Lambda}\vdash g(\varphi).
\]
Combining this with $(*)$, we conclude that
$\dyn{\Lambda}\vdash\varphi$.
\end{proof}

To establish completeness, we show how to construct an h-dynamic model
of $\varphi$ from a model of $g(\varphi)$. The number of copies required
in this construction is determined by the maximal temporal depth of
$\varphi$.

\begin{defi}
The \emph{duration} $d(\varphi)$ of a formula
$\varphi\in\mathcal L$ is defined inductively by
\begin{align*}
d(p) &= 0,\\
d(\lnot\varphi) &= d(\varphi),\\
d(\varphi\land\psi)
  =d(\gamma(\varphi,\psi))
  &=\max\{d(\varphi),d(\psi)\},\\
d(\nc\varphi) &= d(\varphi),\\
d(\FCircleOp\varphi)
  =d(\BCircleOp\varphi)
  &=d(\varphi)+1.
\end{align*}
\end{defi}

\begin{lemm}\label{lemmGtoMod}
If $g(\varphi)$ is satisfiable in a topological model $\mathcal X$,
then $\varphi$ is satisfiable in an h-dynamic topological model
$\mathcal X'$. Moreover, if $\mathcal X$ is finite or Alexandroff, then $\mathcal X'$ also has the corresponding property.
\end{lemm}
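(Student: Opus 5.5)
Let $\mathcal X=(X,\mathcal A,V)$ satisfy $g(\varphi)$ at a point $x_0$, with the near-atoms $\FCircleOp^k p$ read as independent propositional variables. By an easy induction on $\varphi$, every near-atom $\FCircleOp^k p$ occurring in $g(\varphi)$ has $|k|\le d(\varphi)$. Put $N=d(\varphi)$ and $m=2N+1$. The plan is to build $\mathcal X'$ from $m$ disjoint copies of $X$ indexed by $\mathbb Z_m$, namely $X'=X\times\mathbb Z_m$ with the sum topology, and to let the dynamics be the rotation $f(x,i)=(x,i+1)$. This map is clearly a homeomorphism. It preserves $2^{X'}$ and also the algebra $\mathcal A'=\{\bigcup_i A_i\times\{i\}: A_i\in\mathcal A\}$. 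One checks that $\mathcal A'$ is a reachability algebra, since interior and $\gamma$ act copywise: a path lies in a single connected copy. If $X$ is finite or Alexandroff, so is $X'$, because a finite sum of such spaces keeps the property.

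Next comes the valuation. For each variable $p$ and each $i\in\mathbb Z_m$, the idea is that the copy $i$ should encode what $\FCircleOp^{k}p$ says, for $k$ ranging over the residues. We are free to choose the copy numbering so that the point $(x,0)$ plays the role of $x$ with shift $0$. We therefore define
\[V'(p)=\bigcup_{j\in\mathbb Z_m} V(\FCircleOp^{j} p)\times\{j\},\]
where $j$ is taken as the representative in $\{-N,\dots,N\}$. Any near-atom not occurring in $g(\varphi)$ may be valued arbitrarily in $\mathcal A$. Then $(y,i)\models\FCircleOp^k p$ iff $(y,i+k)\in V'(p)$ iff $y\in V(\FCircleOp^{i+k}p)$, with $i+k$ reduced mod $m$.

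The key claim is proved by induction on subformulas. For $\psi$ with $d$-bounded shifts, the claim is that $(y,i)\models\psi$ in $\mathcal X'$ iff $\mathcal X,y\models g_i(\psi)$. Here $g_i(\psi)$ denotes $g(\psi)$ with each near-atom $\FCircleOp^kp$ replaced by $\FCircleOp^{i+k}p$, reduced mod $m$. This is awkward, so I would state the claim only for copy $0$ and for subformulas at the relevant shift. Concretely: for every subformula occurrence $\FCircleOp^n\theta$ handled by $g$ with $|n|\le N$, we have $(y,0)\models\FCircleOp^n\theta$ iff $\mathcal X,y\models g(\FCircleOp^n\theta)$. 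The atomic case uses $|n|\le N<m/2$, so $n$ is its own representative; this is exactly why $m=2N+1$ copies suffice. The Boolean cases are immediate. The $\nc$ and $\gamma$ cases use that $f^n$ is a homeomorphism mapping copy $0$ onto copy $n$. Thus $(y,0)\models\FCircleOp^n\nc\theta$ iff $(y,n)\in\IntOp\val{\theta}$ iff $(y,0)\in\IntOp\val{\FCircleOp^n\theta}$, and similarly for $\gamma$, since paths starting in copy $0$ stay in copy $0$. This matches $\nc g(\FCircleOp^n\theta)$ and $\gamma(g(\FCircleOp^n\theta_1),g(\FCircleOp^n\theta_2))$. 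The temporal clauses of $g$ match directly. Applying the claim with $n=0$ at $y=x_0$ gives $\mathcal X',(x_0,0)\models\varphi$.

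The main obstacle is bookkeeping. We must ensure that the shifts $n$ arising in the recursion of $g$ never exceed $N$ in absolute value, by an induction relating $n$ plus the remaining depth to $d(\varphi)$. We must also make the induction hypothesis general enough to pass through the $\gamma$ and $\nc$ cases, where truth sets on the whole copy $n$, and not just single points, are needed. I would handle the latter by stating the claim as an equality of sets: $\val{\FCircleOp^n\theta}\cap (X\times\{0\})=V(g(\FCircleOp^n\theta))\times\{0\}$.
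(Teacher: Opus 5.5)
Your proposal is correct and is essentially the paper's proof. The paper also takes $2n+1$ disjoint clopen copies of $X$ indexed by $I_n=\{-n,\dots,n\}$ (with $n=\max\{1,d(\varphi)\}$), lets $f'$ shift them cyclically, sets $(x,k)\in V'(p)$ iff $x\in V(\FCircleOp^k p)$, and proves by induction that $\mathcal X',(x,k)\models\psi$ iff $\mathcal X,x\models g(\FCircleOp^k\psi)$ whenever $|k|+d(\psi)\leq n$. Your copy-$0$ formulation with shifted formulas is equivalent to this via $f^k$, and the bookkeeping invariant you anticipate is exactly the paper's bound $|k|+d(\psi)\leq n$.
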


\begin{proof}
Let
\[
n=\max\{1,d(\varphi)\}
\qquad\text{and}\qquad
I_n=\{-n,\ldots,-1,0,1,\ldots,n\},
\]
where $I_n$ is equipped with the discrete topology. Suppose that
$\mathcal X=(X,V)$ and
\[
\mathcal X,a\models g(\varphi)
\]
for some $a\in X$, with the near-atoms occurring in $g(\varphi)$
treated as propositional variables.

Let
\[
X'=X\times I_n
\]
with the product topology, and define $f':X'\to X'$ by
\[
f'(x,k)=
\begin{cases}
(x,k+1), & k<n,\\
(x,-n),  & k=n.
\end{cases}
\]
Thus, $f'$ cyclically permutes the copies of $X$ and is therefore a
homeomorphism. Define a valuation $V'$ on $X'$ by
\[
(x,k)\in V'(p)
\quad\Longleftrightarrow\quad
x\in V(\FCircleOp^k p).
\]

We claim that, for every subformula $\psi$ of $\varphi$ and every
$k\in I_n$ such that
\[
|k|+d(\psi)\leq n,
\]
we have
\[
\mathcal X',(x,k)\models\psi
\quad\Longleftrightarrow\quad
\mathcal X,x\models g(\FCircleOp^k\psi).
\tag{$\dagger$}
\]
The claim is proved by induction on the construction of $\psi$. The
atomic case follows directly from the definition of $V'$, and the
Boolean cases are immediate.

For the spatial cases, observe that every subspace $X\times\{k\}$ is a clopen copy of $X$. Consequently, interiors are computed
within individual copies. Moreover, every path in $X'$ remains in a single copy, since its projection onto the discrete space $I_n$ has to be constant. The induction hypothesis therefore yields $(\dagger)$ for both $\nc$ and $\gamma$.

For the forward temporal case, the induction hypothesis gives
\[
\begin{aligned}
\mathcal X',(x,k)\models\FCircleOp\psi
&\Longleftrightarrow
\mathcal X',(x,k+1)\models\psi\\
&\Longleftrightarrow
\mathcal X,x\models g(\FCircleOp^{k+1}\psi)\\
&\Longleftrightarrow
\mathcal X,x\models g(\FCircleOp^k\FCircleOp\psi).
\end{aligned}
\]
The bound in $(\dagger)$ ensures that no cyclic wraparound occurs.
The backward temporal case is analogous, with $k-1$ in place of
$k+1$. This proves the claim.

Taking $k=0$ and $\psi=\varphi$ in $(\dagger)$ yields
\[
\mathcal X',(a,0)\models\varphi.
\]
Finally, $X'=X\times I_n$ is finite whenever $X$ is finite and is
Alexandroff whenever $X$ is Alexandroff, since $I_n$ is finite and
discrete.
\end{proof}

The next lemma is essentially analogous, but intended for polyhedral models, where we can make the map be geometrically `nicer'.

\begin{figure}[ht]
\centering
\begin{tikzpicture}[
    x=0.82cm,
    y=0.82cm,
    line cap=round,
    line join=round
]

\coordinate (Lt) at (0,3);
\coordinate (Lb) at (0,1);

\coordinate (A)  at (1,2);

\coordinate (B)  at (2,3);
\coordinate (M)  at (2,2);
\coordinate (F)  at (2,1);

\coordinate (T)  at (3,4);
\coordinate (Cc) at (3,3);
\coordinate (D)  at (3,2);
\coordinate (E)  at (3,1);
\coordinate (U)  at (3,0);

\coordinate (N)  at (5,2);

\fill[gray!70,opacity=0.38]
    (Lt) -- (Lb) -- (A) -- (F) -- (E) -- (U) --
    (N) -- (T) -- (Cc) -- (B) -- (A) -- cycle;

\draw[thick] (Lt) -- (Lb);
\draw[thick] (Lt) -- (A);
\draw[thick] (Lb) -- (A);

\draw[thick] (A) -- (B);
\draw[thick] (A) -- (F);
\draw[thick] (B) -- (Cc);
\draw[thick] (F) -- (E);

\draw[thick] (A) -- (N);
\draw[thick] (T) -- (U);

\draw[thick] (T) -- (N);
\draw[thick] (U) -- (N);
\draw[thick] (Cc) -- (N);
\draw[thick] (E) -- (N);

\draw[thick] (B) -- (M);
\draw[thick] (M) -- (E);
\draw[thick] (M) -- (F);
\draw[thick] (M) -- (Cc);

\end{tikzpicture}
\caption{The triangulated polyhedron $P$ used in the
two-dimensional illustration of Lemma~\ref{lem:poly_reconstruction}.}
\label{fig:polyhedron-P}
\end{figure}

Figures~\ref{fig:polyhedron-P} and~\ref{fig:polyhedron-P-prime} illustrate the
geometric idea in dimension two.  The first shows a triangulated polyhedron
$P$.  The second shows a five-fold instance of the construction, sufficient for formulas with duration $\leq 2$, in which a translated copy $P_0$ of $P$ is rotated around the
$x_2$-axis to obtain $P'=\bigcup_{k=-2}^{2}P_k$, where $P_k=f^k(P_0)$ and $f:P'\to P'$ is rotation through $72^\circ$.

\begin{figure}[hb]
\centering
\begin{tikzpicture}[
    scale=0.72,
    transform shape,
    line cap=round,
    line join=round,
    >=Latex,
    x={(1.2cm,0cm)},
    y={(0.06cm,0.60cm)},
    z={(0cm,1.2cm)}
]

\def\R{5}
\def\H{4}

\def\fx{2}
\def\fz{0.75}
\def\fs{0.55}

\coordinate (O) at (0,0,0);
\coordinate (Ctop) at (0,0,\H);

\coordinate (Xax) at (6,0,0);
\coordinate (Yax) at (0,6,0);
\coordinate (Zax) at (0,0,5.2);

\draw[blue,->,thick]
    (O) -- (Xax)
    node[below right] {$x_1$};

\draw[blue, ->,thick]
    (O) -- (Yax)
    node[above] {$x_0$};

\draw[blue, ->,thick]
    (O) -- (Zax)
    node[above] {$x_2$};

\node[below left] at (O) {$O$};

\foreach \k in {0,...,4}{
    \pgfmathsetmacro{\ang}{360/5*\k}

    \coordinate (QA\k) at
        ({\R*cos(\ang)},
         {\R*sin(\ang)},
         0);

    \coordinate (QB\k) at
        ({\R*cos(\ang)},
         {\R*sin(\ang)},
         \H);
}

\foreach \k in {1,...,4}{
    \fill[gray!50,opacity=0.12]
        (O) -- (QA\k) -- (QB\k) -- (Ctop) -- cycle;
}

\fill[gray!60,opacity=0.22]
    (O) -- (QA0) -- (QB0) -- (Ctop) -- cycle;

\foreach \k in {1,...,4}{
    \draw[dashed,gray!75,thin]
        (O) -- (QA\k) -- (QB\k) -- (Ctop);
}

\draw[gray!75,thin]
    (O) -- (QA0) -- (QB0) -- (Ctop);

\foreach \k in {0,...,4}{

    \pgfmathsetmacro{\ang}{360/5*\k}

    \foreach \P/\u/\v in {
        Lt/0/3,
        Lb/0/1,
        A/1/2,
        B/2/3,
        M/2/2,
        F/2/1,
        T/3/4,
        Cc/3/3,
        D/3/2,
        E/3/1,
        U/3/0,
        N/5/2
    }{
        \pgfmathsetmacro{\radial}{\fx+\fs*\u}
        \pgfmathsetmacro{\vertical}{\fz+\fs*\v}

        \coordinate (\P\k) at
            ({\radial*cos(\ang)},
             {\radial*sin(\ang)},
             \vertical);
    }

    \fill[gray!70,opacity=0.38]
        (Lt\k) -- (Lb\k) -- (A\k) -- (F\k) -- (E\k) --
        (U\k) -- (N\k) -- (T\k) -- (Cc\k) -- (B\k) --
        (A\k) -- cycle;

    \draw[thick] (Lt\k) -- (Lb\k);
    \draw[thick] (Lt\k) -- (A\k);
    \draw[thick] (Lb\k) -- (A\k);

    \draw[thick] (A\k) -- (B\k);
    \draw[thick] (A\k) -- (F\k);
    \draw[thick] (B\k) -- (Cc\k);
    \draw[thick] (F\k) -- (E\k);

    \draw[thick] (A\k) -- (N\k);
    \draw[thick] (T\k) -- (U\k);

    \draw[thick] (T\k) -- (N\k);
    \draw[thick] (U\k) -- (N\k);
    \draw[thick] (Cc\k) -- (N\k);
    \draw[thick] (E\k) -- (N\k);

    \draw[thick] (B\k) -- (M\k);
    \draw[thick] (M\k) -- (E\k);
    \draw[thick] (M\k) -- (F\k);
    \draw[thick] (M\k) -- (Cc\k);
}

\end{tikzpicture}
\caption{The rotational construction for $n=2$:
$P'=\bigcup_{k=-2}^{2}P_k$, where $P_{k+1}=f(P_k)$ and $f$ is a
$72^\circ$ rotation around the $x_2$-axis. The transparent
half-planes are included only to make the rotational geometry visible.}
\label{fig:polyhedron-P-prime}
\end{figure}

\begin{lemm}\label{lem:poly_reconstruction}
If $g(\varphi)$ is satisfiable in a polyhedral model, then $\varphi$
is satisfiable in an h-dynamic polyhedral model whose dynamics is a
rotation.
\end{lemm}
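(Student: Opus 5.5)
The plan is to mimic the proof of Lemma~\ref{lemmGtoMod}, replacing the abstract disjoint union $X\times I_n$ by a geometric one in which the $2n+1$ copies of the polyhedron are placed disjointly in a higher-dimensional Euclidean space and permuted cyclically by a rotation. Suppose $\mathcal X=(P,\Poly{P},V)$ is a polyhedral model with $\mathcal X,a\models g(\varphi)$, with near-atoms treated as variables, and $P\subseteq\mathbb R^m$. Set $n=\max\{1,d(\varphi)\}$ and $N=2n+1$. Embed $\mathbb R^m$ into $\mathbb R^{m+2}$ via $x\mapsto(c,0,x)$, where $c>0$ is chosen so large that the translated copy $P_0$ lies in the open half-space $x_0>0$ and well away from the axis $\{x_0=x_1=0\}$; as in Figure~\ref{fig:polyhedron-P-prime}, $P_0$ lies in the half-plane at angle $0$. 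Let $f$ be the linear rotation by angle $2\pi/N$ in the $(x_0,x_1)$-plane, fixing the remaining coordinates. Put $P_k=f^{k}(P_0)$ for $k\in I_n$ and $P'=\bigcup_{k\in I_n}P_k$.

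The first step is to verify that this is a legitimate h-dynamic polyhedral space. Each $P_k$ is the image of $P_0$ under a linear map, so it is triangulated by the image of a triangulation of $P$. The copies sit in distinct half-planes $\{r(\cos\theta_k,\sin\theta_k)\,:\,r>0\}\times\mathbb R^{m}$, which meet only along the axis. Because $P_0$ avoids the axis, the copies are pairwise disjoint compact sets, and the union of the triangulations is a simplicial complex with $|\cdot|=P'$. Consequently $P'$ is a polyhedron and each $P_k$ is clopen in $P'$. Since $f^{N}=\mathrm{id}$, $f$ maps $P'$ onto itself, sending $P_k$ to $P_{k+1}$ and $P_n$ to $P_{-n}$. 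Being linear and invertible, $f$ is a PL homeomorphism, so $(P',\Poly{P'},f)$ lies in $\mathfrak{hDP}$. Writing $h_k=f^k\circ\iota\colon P\to P_k$ with $\iota$ the translation embedding, we define $V'(p)=\bigcup_{k\in I_n}h_k[V(\FCircleOp^k p)]$. Each summand is polyhedral in $P_k$ because affine isomorphisms carry triangulations to triangulations. A finite union of polyhedral subsets of disjoint clopen subpolyhedra is polyhedral in $P'$, since the triangulations of different copies can be refined independently. So $V'$ takes values in $\Poly{P'}$.

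Next we prove the analogue of $(\dagger)$: for every subformula $\psi$ of $\varphi$ and every $k$ with $|k|+d(\psi)\le n$, we have $\mathcal X',h_k(x)\models\psi$ iff $\mathcal X,x\models g(\FCircleOp^k\psi)$. The induction follows Lemma~\ref{lemmGtoMod} verbatim. Interiors localise to the clopen pieces. Any path in $P'$ stays in one $P_k$ by connectedness of $[0,1]$, and $h_k$ is a homeomorphism $P\cong P_k$ transporting paths; by Lemma~\ref{lem:PL gamma} we may even work with PL paths, which $h_k$ preserves. The temporal steps use $f\circ h_k=h_{k+1}$ for $k<n$, and the duration bound again prevents wraparound. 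Taking $k=0$ gives $\mathcal X',h_0(a)\models\varphi$.

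The main obstacle I anticipate is the geometric bookkeeping in the first step. One must check that the rotated copies are genuinely disjoint, which needs $P_0$ to avoid the rotation axis; merely lying on the closed half-plane is not enough. One must also check that $V'(p)$ is polyhedral with respect to a single triangulation of all of $P'$. Both points are routine once $c$ exceeds the diameter of $P$ and triangulations are transported by the linear maps $h_k$.
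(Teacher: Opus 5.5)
Your proposal is correct and follows essentially the same route as the paper: $2n+1$ pairwise disjoint copies of $P$ placed around a rotation axis in a higher-dimensional space, the valuation $V'(p)=\bigcup_k h_k[V(\FCircleOp^k p)]$, and the same induction $(\dagger)$. That induction rests on the copies being clopen, on paths staying inside a single copy, and on the duration bound ruling out wraparound. The only difference is cosmetic: the paper works in $\mathbb R^{m+1}$, using the existing coordinate $x_1$ as one axis of the rotation plane after translating $P$ off the axis, whereas your embedding $x\mapsto(c,0,x)$ into $\mathbb R^{m+2}$ sends all of $P_0$ to a single point of the rotation plane, so any $c>0$ already suffices and the condition that $c$ be large is unnecessary.
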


\begin{proof}
Suppose that $\mathcal M=(P,V)$ is a polyhedral model and
\[
\mathcal M,x_0\models g(\varphi),
\]
where the near-atoms $\FCircleOp^k p$ are treated as independent
propositional variables. Let $P\subseteq\mathbb R^m$ and put

\[
n=d(\varphi),
\qquad
\theta=\frac{2\pi}{2n+1}.
\]

Let $\mathbb R^{m+1}$ have coordinates
$(x_0,x_1,\ldots,x_m)$. Identify $P\subseteq\mathbb R^m$ with
$\{0\}\times P\subseteq\mathbb R^{m+1}$. 

Since $P$ is bounded, we may choose $a>0$ so that
\[
P_0:=a\bm e_1+\{0\}\times P
\]
is contained in the open half-hyperplane $H_0=\{(x_0,\ldots,x_m):x_0=0,\ x_1>0\}$. Write $\iota(\bm x)=a\bm e_1+(0,\bm x)$ for the resulting embedding of $P$ into $P_0$.

It will be convenient to represent vectors in $\mathbb R^{m+1}$ as
$(\bm u,\bm v)$, where $\bm u\in\mathbb R^2$ and
$\bm v\in\mathbb R^{m-1}$. 

Let $A$ be the $2\times2$ rotation matrix
through the angle $\theta$, and define
\[
R(\bm u,\bm v)=(A\bm u,\bm v).
\]
Thus, $R$ is a rotation around the axis $L=\{x_0=x_1=0\}$.

For $-n\leq k\leq n$, let
\[
P_k=R^k(P_0).
\]
The half-hyperplanes $R^k(H_0)$ are pairwise disjoint outside $L$.
Since $P_0\cap L=\varnothing$, and hence $P_k\cap L=\varnothing$ for
every $k$, the polyhedra $P_k$ are pairwise disjoint. Therefore,
\[
P'=\bigcup_{k=-n}^{n}P_k
\]
is a polyhedron.

Since $R^{2n+1}$ is the identity, $R$ cyclically permutes the sets
$P_k$. Consequently, its restriction
\[
f=R|_{P'}:P'\to P'
\]
is a PL homeomorphism.

Define a valuation $V'$ on $P'$ by
\[
R^k(\iota(\bm x))\in V'(p)
\quad\Longleftrightarrow\quad
\bm x\in V(\FCircleOp^k p),
\qquad -n\leq k\leq n.
\]
Each $V'(p)$ is a finite union of rotated copies of polyhedral subsets
of $P$, and hence belongs to $\Poly{P'}$.

We claim that, for every subformula $\psi$ of $\varphi$ and every
integer $k$ such that
\[
|k|+d(\psi)\leq n,
\]
we have
\[
(P',f,V'),R^k(\iota(\bm x))\models\psi
\quad\Longleftrightarrow\quad
(P,V),\bm x\models g(\FCircleOp^k\psi).
\tag{$\dagger$}
\]
The claim is proved by induction on the construction of $\psi$. The
atomic and Boolean cases follow directly from the definitions.

For the spatial cases, observe that the pairwise disjoint compact
polyhedra $P_k$ are open and closed in $P'$. Thus, interiors are
computed within individual copies. Moreover, every path in $P'$ lies
in a single $P_k$, since the continuous image of the connected space
$[0,1]$ cannot meet two distinct clopen components. The induction
hypothesis therefore yields $(\dagger)$ for both $\nc$ and $\gamma$.

For the forward temporal case, the induction hypothesis gives
\[
\begin{aligned}
(P',f,V'),R^k(\iota(\bm x))\models\FCircleOp\psi
&\Longleftrightarrow
(P',f,V'),R^{k+1}(\iota(\bm x))\models\psi\\
&\Longleftrightarrow
(P,V),\bm x\models g(\FCircleOp^{k+1}\psi)\\
&\Longleftrightarrow
(P,V),\bm x\models g(\FCircleOp^k\FCircleOp\psi).
\end{aligned}
\]
The bound in $(\dagger)$ ensures that no cyclic wraparound occurs.
The backward temporal case is analogous, with $k-1$ in place of
$k+1$. This proves the claim.

Taking $k=0$, $\psi=\varphi$, and $\bm x=x_0$ in $(\dagger)$ yields
\[
(P',f,V'),\iota(x_0)\models\varphi,
\]
as required.
\end{proof}

\begin{remark}
The passage from $\mathbb R^m$ to $\mathbb R^{m+1}$ in the proof is
used only to make the geometry of the construction particularly
transparent; it is not essential, provided that the ambient dimension
is at least two. Indeed, one may choose an axis of rotation in
$\mathbb R^m$ and place a suitably scaled or translated copy of $P$
sufficiently far from that axis. Since $P$ is bounded, it may be
enclosed in a ball whose angular diameter, as viewed from the rotation
axis, is strictly smaller than
\[
\frac{2\pi}{2n+1}.
\]
Equivalently, the copy of $P$ can be placed inside the interior of a
cone whose aperture is smaller than the angle of rotation. Rotating
this cone through successive multiples of $2\pi/(2n+1)$ produces
$2n+1$ pairwise disjoint cones, and hence pairwise disjoint rotated
copies of $P$. The remainder of the construction then proceeds exactly
as in the proof of Lemma~\ref{lem:poly_reconstruction}. The dimension-raising construction used in the proof has the advantage
of being uniform and of working independently of the dimension of the
original embedding. The same-dimensional construction, however, 
admits an even simpler geometric visualization, as illustrated in
Figure~\ref{fig:rotation-same-dimension}.
\end{remark}

\begin{figure}[ht]
\centering
\begin{tikzpicture}[
    scale=0.78,
    line cap=round,
    line join=round,
    >=Latex
]

\def\OrbitR{4.8}   
\def\FishScale{0.44}
\def\BallR{1.42}   
\def\ConeR{6.8}    

\coordinate (O) at (0,0);


\foreach \j in {0,...,4}{
    \pgfmathsetmacro{\boundary}{36+72*\j}
    \draw[dashed,gray!65,thin]
        (O) -- ({\boundary}:\ConeR);
}

\fill (O) circle[radius=1.5pt];
\node[below left] at (O) {$O$};

\draw[->,blue!70!black,thick]
    (1.05,-0.76)
    arc[start angle=-36,end angle=36,radius=1.30];

\node[blue!70!black,right]
    at (1.45,0) {$72^\circ$};

%
\foreach \k/\idx in {
    0/0,
    1/1,
    2/2,
    3/{-2},
    4/{-1}
}{
    \pgfmathsetmacro{\ang}{72*\k}

    \coordinate (C\k) at
        ({\OrbitR*cos(\ang)},
         {\OrbitR*sin(\ang)});

    \fill[gray!30,opacity=0.10]
        (C\k) circle[radius=\BallR];

    \draw[densely dashed,gray!70,thin]
        (C\k) circle[radius=\BallR];

    \foreach \P/\u/\v in {
        Lt/0/3,
        Lb/0/1,
        A/1/2,
        B/2/3,
        M/2/2,
        F/2/1,
        T/3/4,
        Cc/3/3,
        D/3/2,
        E/3/1,
        U/3/0,
        N/5/2
    }{
        \pgfmathsetmacro{\radial}
            {\OrbitR+\FishScale*(\u-2.5)}
        \pgfmathsetmacro{\tangent}
            {\FishScale*(\v-2)}

        \coordinate (\P\k) at
        ({
            \radial*cos(\ang)
            -\tangent*sin(\ang)
         },{
            \radial*sin(\ang)
            +\tangent*cos(\ang)
         });
    }

    \fill[gray!65,opacity=0.38]
        (Lt\k) -- (Lb\k) -- (A\k) --
        (F\k) -- (E\k) -- (U\k) --
        (N\k) -- (T\k) -- (Cc\k) --
        (B\k) -- (A\k) -- cycle;

    \draw[thick] (Lt\k) -- (Lb\k);
    \draw[thick] (Lt\k) -- (A\k);
    \draw[thick] (Lb\k) -- (A\k);

    \draw[thick] (A\k) -- (B\k);
    \draw[thick] (A\k) -- (F\k);
    \draw[thick] (B\k) -- (Cc\k);
    \draw[thick] (F\k) -- (E\k);

    \draw[thick] (A\k) -- (N\k);
    \draw[thick] (T\k) -- (U\k);

    \draw[thick] (T\k) -- (N\k);
    \draw[thick] (U\k) -- (N\k);
    \draw[thick] (Cc\k) -- (N\k);
    \draw[thick] (E\k) -- (N\k);

    \draw[thick] (B\k) -- (M\k);
    \draw[thick] (M\k) -- (E\k);
    \draw[thick] (M\k) -- (F\k);
    \draw[thick] (M\k) -- (Cc\k);

    \node at
    ({
        (\OrbitR+\BallR+0.38)*cos(\ang)
     },{
        (\OrbitR+\BallR+0.38)*sin(\ang)
     })
    {$P_{\idx}$};
}

\end{tikzpicture}
\caption{A same-dimensional realization of the rotational construction
for $n=2$. Each copy $P_k$ is contained in an enclosing disk that lies
strictly inside one of the five angular sectors.}
\label{fig:rotation-same-dimension}
\end{figure}

\begin{thm}\label{thm:completeness}
\quad
\begin{enumerate}
\item
$\dyn{\lgc{TLR}}$ is sound and complete for the class
$\mathfrak{hDT}$.

\item
$\dyn{\lgc{ALR}}$ is sound and complete both for the class
$\mathfrak{hDF}$ and for the class $\mathfrak{hDA}$.

\item
$\dyn{\lgc{PLR}}$ is sound and complete for the class
$\mathfrak{hDP}$. In fact, it is already complete for the subclass of
h-dynamic polyhedral spaces whose dynamics is a rotation.
\end{enumerate}
\end{thm}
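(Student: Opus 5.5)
Soundness in all three items is exactly Theorem~\ref{thm:soundness}, and the rotation subclass is contained in $\mathfrak{hDP}$, so soundness for it is immediate. The work is completeness, which I would prove in contrapositive form, uniformly for $\Lambda\in\{\lgc{TLR},\lgc{ALR},\lgc{PLR}\}$. Suppose $\dyn{\Lambda}\nvdash\varphi$. By Lemma~\ref{lem:g_equivalence}, $\Lambda\nvdash g(\varphi)$, where the near-atoms $\FCircleOp^n p$ of $g(\varphi)$ are read as fresh, pairwise distinct propositional variables. Hence $g(\varphi)$ is an $\mathcal L_{\nc,\gamma}$-formula, and we have $\Lambda\nvdash \lnot\lnot g(\varphi)$.

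Next apply the completeness half of Theorem~\ref{t:comp_no-hD}. For $\lgc{TLR}$ this gives a space in $\mathfrak T$; for $\lgc{ALR}$ a finite space (which is in particular Alexandroff); for $\lgc{PLR}$ a polyhedral model $(P,\Poly P,V)$. In each case there is a point $a$ with $a\models \lnot g(\varphi)$.

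Observe that $\lnot g(\varphi)=g(\lnot\varphi)$ by the negation clause of $g$ with $n=0$. So $g(\lnot\varphi)$ is satisfiable. Two cases follow.
\begin{itemize}
\item[\normal{(a)}] In the topological, finite and Alexandroff cases, Lemma~\ref{lemmGtoMod} yields an h-dynamic model of the same kind, in $\mathfrak{hDT}$, $\mathfrak{hDF}$ or $\mathfrak{hDA}$, that satisfies $\lnot\varphi$. Hence $\varphi$ is not valid on that class.
\item[\normal{(b)}] In the polyhedral case, Lemma~\ref{lem:poly_reconstruction} yields an h-dynamic polyhedral model whose dynamics is a rotation and which satisfies $\lnot\varphi$. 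This gives completeness for the rotation subclass, and a fortiori for $\mathfrak{hDP}$.
\end{itemize}

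Most of the bookkeeping is routine. The one point needing care is the variable renaming behind Lemma~\ref{lem:g_equivalence}. The model given by Theorem~\ref{t:comp_no-hD} assigns independent admissible sets to the near-atoms, and the reconstruction lemmas rely on exactly this independence when they define $V'(p)$ from the sets $V(\FCircleOp^k p)$.

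The second point is that the reconstructed valuations must be admissible. In the topological case every subset is admissible, so nothing needs checking. In the polyhedral case admissibility is verified in the proof of Lemma~\ref{lem:poly_reconstruction}.

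The third point is the degenerate case $d(\varphi)=0$. Lemma~\ref{lemmGtoMod} handles it by taking $n\ge 1$. In the polyhedral lemma it gives $\theta=2\pi$, so the construction uses a single copy and the rotation acts as the identity, which is still a rotation and still a PL homeomorphism. So no separate argument is needed.

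I expect no genuine obstacle here, since everything reduces to the earlier lemmas. The only step to double-check is that the near-atom reading of Lemma~\ref{lem:g_equivalence} matches the hypothesis of Lemmas~\ref{lemmGtoMod} and~\ref{lem:poly_reconstruction}.
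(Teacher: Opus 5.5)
Your proposal is correct and matches the paper's proof step for step: Lemma~\ref{lem:g_equivalence} gives $\Lambda\nvdash g(\varphi)$, Theorem~\ref{t:comp_no-hD} supplies a countermodel, you rewrite $\lnot g(\varphi)=g(\lnot\varphi)$, and then you apply Lemma~\ref{lemmGtoMod} or Lemma~\ref{lem:poly_reconstruction} to $\lnot\varphi$. Your extra remarks are accurate details that the paper leaves implicit: the case $d(\varphi)=0$ yields the identity rotation on a single copy, and a finite countermodel covers both $\mathfrak{hDF}$ and $\mathfrak{hDA}$.
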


\begin{proof}
Soundness was established in Theorem~\ref{thm:soundness}. For
completeness, suppose that
\[
\dyn{\Lambda}\nvdash\varphi.
\]
By Lemma~\ref{lem:g_equivalence},
\[
\Lambda\nvdash g(\varphi).
\]
Now Theorem~\ref{t:comp_no-hD} provides a model $\mathcal M=(X,V)$ in the corresponding spatial class
and a point $x\in X$ such that
\[
\mathcal M,x\models\lnot g(\varphi).
\]
Since $g$ commutes with negation, this is equivalent to
\[
\mathcal M,x\models g(\lnot\varphi).
\]

For $\Lambda=\lgc{TLR}$ or $\Lambda=\lgc{ALR}$, apply
Lemma~\ref{lemmGtoMod} to the formula $\lnot\varphi$. The resulting
h-dynamic model satisfies $\lnot\varphi$. Moreover, it is finite when
the original model is finite and Alexandroff when the original model
is Alexandroff.

For $\Lambda=\lgc{PLR}$, apply
Lemma~\ref{lem:poly_reconstruction} to the formula $\lnot\varphi$.
This yields an h-dynamic polyhedral model, whose dynamics is a
rotation, satisfying $\lnot\varphi$. Thus, in every case,
$\varphi$ has a countermodel in the required class.
\end{proof}

\section{Conclusions and Further Work}

We have proposed three modifications of Dynamic Topological Logic: first, by noting that systems with homeomorphisms can naturally be endowed with backwards-looking temporal modalities; second, by considering semantics over polyhedra, and finally, by enriching the language with spatial reachability operators.
It is worth remarking that these modifications work independently and can be combined in various ways, and indeed we have axiomatised not only the dynamic reachability logic of polyhedra, but also of the more standard topological and Alexandroff spaces. It is worth noting that, unlike in the $\gamma$-free logic, the latter two classes lead to different dynamic topological logics.

Meanwhile, there have been many developments in the three decades since Artemov and co-authors first introduced DTL, leading to many natural open questions for the polyhedral variants.
The original logic of Artemov et al.~\cite{ArtemovDavorenNerode1997-modal-logics-and-topological-semantics-for-hybrid-systems} focuses on dynamical systems where the function is continuous, but not necessarily invertible.
We conjecture that the `past'-free fragment of our logic is sound and complete for this class, although there are some technical challenges when dealing with $\gamma$ in this context.

\begin{question}
Is the back-free fragment of dynamic polyhedral reachability logic complete for the class of polyhedra with a PL endofunction?
\end{question}

A large portion of the literature has also focused on logics with `eventually'. In this context, DTL is undecidable~\cite{KonevKontchakovWolterZakharyaschev2006-dtl-over-spaces-with-continuous-functions} and non-finitely axiomatisable~\cite{Fernandez2012-DTL-non-finite-axiomatizability} (or even non-axiomatisable when $f$ is a homeomorphism~\cite{KonevKontchakovWolterZakharyaschev2006-on-dynamic-topological-and-metric-logics}), but it is computably enumerable~\cite{Fernandez2009-non-det-quasi-models} and enjoys an axiomatisation in an extended language~\cite{Fernandez2012-DTLstar-axiomotization}.
Particularly promising is the fact that DTL over a $\sf GL$ base {\em is} finitely axiomatisable~\cite{Untangled}, which will most likely carry over to $\sf Grz$ and hence to polyhedral logics.
Recalling that $\sf Grz$ defines the class of hereditarily irresolvable spaces \cite{BEZHANISHVILI2003291}, but is also complete for the subclass of scattered spaces, we arrive at:

\begin{question}
Is $\sf DTL$ over a $\sf Grz$ base complete for the class of dynamic scattered spaces, or at least for the class of dynamic hereditarily irresolvable spaces?
\end{question}

However, `eventually' is defined as an infinite union, which does not produce polyhedral sets in general.
This produces a theoretical challenge in finding the `right' DTL of polyhedra, in the presence of infinitary tenses.
This could be remedied, for example, by restricting to fixed triangulations, corresponding to finite $\sf Grz$ frames, but currently no version of DTL with the finite model property seems to be known in the presence of `eventually', except in the case of $\sf S5$ topologies~\cite{Kremer09}.
This immediately raises the following question.

\begin{question}
Is DTL of finite topological spaces decidable?
What about DTL of finite $\sf Grz$ spaces?
\end{question}

One promising exception here is DTL with finite iterations, which is decidable and enjoys the FMP~\cite{GabelaiaKuruczWolterZakharyaschev2006-expanding-domain-producs}.
However, such logics have only been treated semantically, leading to another natural open question.

\begin{question}
    Is DTL with finite iterations finitely axiomatisable, either over the class of all topological spaces, or over the class of all $\sf Grz$ spaces?
\end{question}

An affirmative answer to this question should be a major step in settling the following.,

\begin{question}
    Is dynamic polyhedral logic with `eventually' finitely axiomatisable over the class of dynamic polyhedra with finite iterations, with or without $\gamma$?
\end{question}

\bibliographystyle{spmpsci}
\bibliography{main}

@mastersthesis{Kopnev,
    author = {Kopnev, Kirill},
    title = {Dynamic logics of polyhedra and their application in 3D modeling},
    school = {Intitute for Logic, Language and Computation, University of Amsterdam},
    year = {2023}
}

@article{BezhanishviliMarraMcNeillPedrini2018,
  author  = {Bezhanishvili, Nick and Marra, Vincenzo and McNeill, Daniel and Pedrini, Andrea},
  title   = {Tarski's Theorem on Intuitionistic Logic, for Polyhedra},
  journal = {Annals of Pure and Applied Logic},
  volume  = {169},
  number  = {5},
  pages   = {373--391},
  year    = {2018},
  doi     = {10.1016/j.apal.2017.12.005}
}

@article{AdamDayBezhanishviliGabelaiaMarra2024,
  author  = {Adam-Day, Sam and Bezhanishvili, Nick and Gabelaia, David and Marra, Vincenzo},
  title   = {Polyhedral Completeness of Intermediate Logics: The Nerve Criterion},
  journal = {The Journal of Symbolic Logic},
  volume  = {89},
  number  = {1},
  pages   = {342--382},
  year    = {2024},
  doi     = {10.1017/jsl.2022.76}
}

@article{BEZHANISHVILI2003291,
title = {Scattered, Hausdorff-reducible, and hereditarily irresolvable spaces},
journal = {Topology and its Applications},
volume = {132},
number = {3},
pages = {291-306},
year = {2003},
issn = {0166-8641},
doi = {https://doi.org/10.1016/S0166-8641(03)00039-7},
url = {https://www.sciencedirect.com/science/article/pii/S0166864103000397},
author = {Guram Bezhanishvili and Ray Mines and Patrick J. Morandi}
}

@misc{gabelaia2018modallogicplanarpolygons,
      title={Modal logic of planar polygons}, 
      author={David Gabelaia and Kristina Gogoladze and Mamuka Jibladze and Evgeny Kuznetsov and Maarten Marx},
      year={2018},
      eprint={1807.02868},
      archivePrefix={arXiv},
      primaryClass={math.LO},
      url={https://arxiv.org/abs/1807.02868}, 
}

@book{RourkeSanderson1972,
  author    = {Rourke, C. P. and Sanderson, B. J.},
  title     = {Introduction to Piecewise-Linear Topology},
  publisher = {Springer-Verlag},
  address   = {Berlin, Heidelberg, New York},
  year      = {1972}
}

@inproceedings{gagarin,
    author = {Aleksandr Gagarin and David Fern\'andez-Duque},
    title = {Topological Logics of Path-reachability},
    booktitle = {Advances in Modal Logic},
    year = {2026}
}

@book{G92,
	Author = {R. Goldblatt},
	Edition = 2,
	Number = 7,
	Series = {{CSLI Lecture Notes}},
	Publisher = {Center for the Study of Language and Information},
	Title = {{{L}ogics of {T}ime and {C}omputation}},
	note = {second edition},
	Year = 1992
}

@article{Kremer09,
  author       = {Philip Kremer},
  title        = {Dynamic topological {S5}},
  journal      = {Ann. Pure Appl. Log.},
  volume       = {160},
  number       = {1},
  pages        = {96--116},
  year         = {2009},
  url          = {https://doi.org/10.1016/j.apal.2009.01.015},
  doi          = {10.1016/J.APAL.2009.01.015},
  bibsource    = {dblp computer science bibliography, https://dblp.org}
}

@INPROCEEDINGS{robotics,
  author={Ijspeert, A.J. and Nakanishi, J. and Schaal, S.},
  booktitle={Proceedings 2002 IEEE International Conference on Robotics and Automation (Cat. No.02CH37292)},
  title={Movement imitation with nonlinear dynamical systems in humanoid robots},
  year={2002},
  volume={2},
  number={},
  pages={1398-1403 vol.2},
  doi={10.1109/ROBOT.2002.1014739}}

@inproceedings{Untangled,
  author       = {David Fern{\'{a}}ndez{-}Duque and
                  Yo{\`{a}}v Montacute},
  editor       = {Brian Williams and
                  Yiling Chen and
                  Jennifer Neville},
  title        = {Untangled: {A} Complete Dynamic Topological Logic},
  booktitle    = {Thirty-Seventh {AAAI} Conference on Artificial Intelligence, {AAAI}
                  2023, Thirty-Fifth Conference on Innovative Applications of Artificial
                  Intelligence, {IAAI} 2023, Thirteenth Symposium on Educational Advances
                  in Artificial Intelligence, {EAAI} 2023, Washington, DC, USA, February
                  7-14, 2023},
  pages        = {6355--6362},
  publisher    = {{AAAI} Press},
  year         = {2023},
  url          = {https://ojs.aaai.org/index.php/AAAI/article/view/25782},
  bibsource    = {dblp computer science bibliography, https://dblp.org}
}

@ARTICLE{selfdrive,
  author={Dickmanns, Ernst D. and Mysliwetz, B. D. and Christians, T.},
  journal={IEEE Transactions on Systems, Man, and Cybernetics},
  title={An integrated spatio-temporal approach to automatic visual guidance of autonomous vehicles},
  year={1990},
  volume={20},
  number={6},
  pages={1273-1284},
  doi={10.1109/21.61200}}

@book{akin,
author={E. Akin},
title={The General Topology of Dynamical Systems},
series={Graduate Studies in Mathematics},
publisher={American Mathematical Society},
year={1993},
}

@article{selfdrive2,
	abstractnote = {In order to be trusted by humans, Artificial Intelligence agents should be able to describe rationales behind their decisions. One such application is human action recognition in critical or sensitive scenarios, where trustworthy and explainable action recognizers are expected. For example, reliable pedestrian action recognition is essential for self-driving cars and explanations for real-time decision making are critical for investigations if an accident happens. In this regard, learning-based approaches, despite their popularity and accuracy, are disadvantageous due to their limited interpretability. This paper presents a novel neuro-symbolic approach that recognizes actions from videos with human-understandable explanations. Specifically, we first propose to represent videos symbolically by qualitative spatial relations between objects called qualitative spatial object relation chains. We further develop a neural saliency estimator to capture the correlation between such object relation chains and the occurrence of actions. Given an unseen video, this neural saliency estimator is able to tell which object relation chains are more important for the action recognized. We evaluate our approach on two real-life video datasets, with respect to recognition accuracy and the quality of generated action explanations. Experiments show that our approach achieves superior performance on both aspects to previous symbolic approaches, thus facilitating trustworthy intelligent decision making. Our approach can be used to augment state-of-the-art learning approaches with explainabilities.},
	author = {Hua, Hua and Li, Dongxu and Li, Ruiqi and Zhang, Peng and Renz, Jochen and Cohn, Anthony},
	doi = {10.1609/aaai.v36i5.20513},
	journal = {Proceedings of the AAAI Conference on Artificial Intelligence},
	month = {Jun.},
	number = {5},
	pages = {5710-5718},
	title = {Towards Explainable Action Recognition by Salient Qualitative Spatial Object Relation Chains},
	url = {https://ojs.aaai.org/index.php/AAAI/article/view/20513},
	volume = {36},
	year = {2022}}

@inproceedings{PolyCompleteness,
    author = {Bezhanishvili, N. and Bussi, L. and Ciancia, V. and Fernandez-Duque, D. and Gabelaia, D.},
    title = {Logics of Polyhedral Reachability},
    booktitle = {Advances in Modal Logic},
    year = {2024},
    publisher = {College Publications},
    pages = {187-204},
    isbn = {978-1-84890-467-5}
}

@article{CLLM16,
  title    = {{Model Checking Spatial Logics for Closure Spaces}},
  author   = {Ciancia, V. and Latella, D. and Loreti, M. and Massink, M.},
  doi      = {10.2168/LMCS-12(4:2)2016},
  journal  = {{Logical Methods in Computer Science}},
  volume   = {Volume 12, Issue 4},
  year     = {2016},
  month    = Oct
}

@inproceedings{BCLM19,
  author    = {Belmonte, G. and Ciancia, V. and Latella, D. and Massink, M.},
  title     = {VoxLogicA: {A} Spatial Model Checker for Declarative Image Analysis},
  booktitle = {Tools and Algorithms for the Construction and Analysis of Systems, {TACAS}},
  series    = {LNCS},
  volume    = "11427",
  pages     = {281--298},
  publisher = {Springer},
  year      = {2019},
  doi       = {10.1007/978-3-030-17462-0\_16}
}

@article{BCGGLM22,
  title    = {Geometric Model Checking of Continuous Space},
  author   = {Bezhanishvili, N. and Ciancia, V. and Gabelaia, D. and Grilletti, G. and Latella, D. and Massink, M.},
  doi      = {10.46298/lmcs-18(4:7)2022},
  journal  = {Logical Methods in Computer Science},
  volume   = {Volume 18, Issue 4},
  year     = {2022},
  month    = {Nov},
}

@inproceedings{BCGLM22,
   author    = {Bussi, L.
                and Ciancia, V.
                and Gadducci, F.
                and Latella, D.
                and Massink, M.},
   editor    = {Bowles, J.
                and Broccia, G.
                and Pellungrini, R.},
   title     = {Towards Model Checking Video Streams Using
{V}ox{L}ogic{A} on {GPU}s},
   booktitle = {From Data to Models and Back},
   series    = {LNCS},
   volume    = {13268},
   year      = {2022},
   publisher = {Springer},
   pages     = {78--90},
   doi       = {10.1007/978-3-031-16011-0_6}
}

@article{BBCJLMV26,
    title      = {Weak Simplicial Bisimilarity and Minimisation for Polyhedral Model Checking},
    author     = {Nick Bezhanishvili and Laura Bussi and Vincenzo Ciancia and David Gabelaia and Mamuka Jibladze and Diego Latella and Mieke Massink and Erik P. de Vink},
    url        = {https://lmcs.episciences.org/14827},
    doi        = {10.46298/lmcs-22(1:6)2026},
    journal    = {Logical Methods in Computer Science},
    issn       = {1860-5974},
    volume     = {Volume 22, Issue 1},
    eid        = 6,
    year       = {2026},
    month      = {Jan},
}

@InProceedings{FASE26,
author="Belmonte, Gina
and Ciancia, Vincenzo
and Latella, Diego
and Massink, Mieke",
editor="Albert, Elvira
and Pasareanu, Corina",
title="Model Checking in Space with Applications to Medical Image Analysis",
booktitle="Fundamental Approaches to Software Engineering",
year="2026",
publisher="Springer Nature Switzerland",
address="Cham",
pages="3--18",
isbn="978-3-032-22774-4"
}

@article{BCM25,
title = {Symbolic and hybrid AI for brain tissue segmentation using spatial model checking},
journal = {Artificial Intelligence in Medicine},
volume = {167},
pages = {103154},
year = {2025},
issn = {0933-3657},
doi = {https://doi.org/10.1016/j.artmed.2025.103154},
url = {https://www.sciencedirect.com/science/article/pii/S0933365725000892},
author = {Gina Belmonte and Vincenzo Ciancia and Mieke Massink}
}

@ARTICLE{McKinseyTarski1944-the-algebra-of-topology,
  AUTHOR = {McKinsey, John Charles Chenoweth and Tarski, Alfred},
  DOI = {10.2307/1969080},
  EPRINT = {1969080},
  EPRINTTYPE = {jstor},
  JOURNAL = {Annals of Mathematics},
  NUMBER = {1},
  PAGES = {141--191},
  TITLE = {The algebra of topology},
  VOLUME = {45},
  YEAR = {1944},
}

@TECHREPORT{ArtemovDavorenNerode1997-modal-logics-and-topological-semantics-for-hybrid-systems,
  AUTHOR = {Artemov, Sergei Nikolaevich and Davoren, Jennifer M. and Nerode, Anil},
  INSTITUTION = {Cornell University},
  NUMBER = {MSI 97-05},
  TITLE = {Modal Logics and Topological Semantics for Hybrid Systems},
  YEAR = {1997},
}

@ARTICLE{KremerMints2005-DTL,
  AUTHOR = {Kremer, Philip and Mints, Grigori},
  PUBLISHER = {Elsevier},
  DOI = {10.1016/j.apal.2004.06.004},
  ISSN = {0168-0072},
  JOURNAL = {Annals of Pure and Applied Logic},
  NUMBER = {1},
  PAGES = {133--158},
  TITLE = {Dynamic topological logic},
  VOLUME = {131},
  YEAR = {2005},
}

@ARTICLE{GabelaiaKuruczWolterZakharyaschev2006-expanding-domain-producs,
  AUTHOR = {Gabelaia, David and Kurucz, Agi and Wolter, Frank and Zakharyaschev, Michael},
  DOI = {10.1016/j.apal.2006.01.001},
  ISSN = {0168-0072},
  JOURNAL = {Annals of Pure and Applied Logic},
  NUMBER = {1-3},
  PAGES = {245--268},
  TITLE = {Non-primitive recursive decidability of products of modal logics with expanding domains},
  VOLUME = {142},
  YEAR = {2006},
}

@INPROCEEDINGS{KonevKontchakovWolterZakharyaschev2006-dtl-over-spaces-with-continuous-functions,
  AUTHOR = {Konev, Boris and Kontchakov, Roman and Wolter, Frank and Zakharyaschev, Michael},
  EDITOR = {Governatori, Guido and Hodkinson, Ian and Venema, Yde},
  LOCATION = {London},
  PUBLISHER = {College Publications},
  BOOKTITLE = {Advances in Modal Logic},
  EVENTDATE = {2006-09},
  ISBN = {1-904987-20-6},
  PAGES = {299--318},
  TITLE = {Dynamic Topological Logics Over Spaces with Continuous Functions},
  VENUE = {Noosa, Queensland, Australia},
  VOLUME = {6},
  YEAR = {2006},
}

@ARTICLE{KonevKontchakovWolterZakharyaschev2006-on-dynamic-topological-and-metric-logics,
  AUTHOR = {Konev, Boris and Kontchakov, Roman and Wolter, Frank and Zakharyaschev, Michael},
  DOI = {10.1007/s11225-006-9005-x},
  JOURNAL = {Studia Logica},
  NUMBER = {1},
  PAGES = {129--160},
  TITLE = {On Dynamic Topological and Metric Logics},
  VOLUME = {84},
  YEAR = {2006},
}

@ARTICLE{Fernandez2009-non-det-quasi-models,
  AUTHOR = {Fern\'{a}ndez-Duque, David},
  DOI = {10.1016/j.apal.2008.09.015},
  ISSN = {0168-0072},
  JOURNAL = {Annals of Pure and Applied Logic},
  NOTE = {Kurt G\"{o}del Centenary Research Prize Fellowships},
  NUMBER = {2},
  PAGES = {110--121},
  TITLE = {Non-deterministic semantics for dynamic topological logic},
  VOLUME = {157},
  YEAR = {2009},
}

@ARTICLE{Fernandez2012-DTLstar-axiomotization,
  AUTHOR = {Fern\'{a}ndez-Duque, David},
  DOI = {10.2178/jsl/1344862169},
  JOURNAL = {The Journal of Symbolic Logic},
  NUMBER = {3},
  PAGES = {947--969},
  TITLE = {A sound and complete axiomatization for Dynamic Topological Logic},
  VOLUME = {77},
  YEAR = {2012},
}

@INPROCEEDINGS{Fernandez2012-DTL-non-finite-axiomatizability,
  AUTHOR = {Fern\'{a}ndez-Duque, David},
  EDITOR = {Bolander, Thomas and Bra\"{u}ner, Torben and Ghilardi, Silvio and Moss, Lawrence},
  LOCATION = {London},
  PUBLISHER = {College Publications},
  BOOKTITLE = {Advances in Modal Logic},
  EVENTDATE = {2012-08},
  ISBN = {978-1-84890-068-4},
  PAGES = {200--216},
  TITLE = {Non-finite Axiomatizability of Dynamic Topological Logic},
  VENUE = {Copenhagen, Denmark},
  VOLUME = {9},
  YEAR = {2012},
}

\end{document}